
\documentclass[sigconf, authorversion, nonacm]{acmart}



\usepackage[utf8]{inputenc}
\usepackage{xcolor}
\usepackage{amsmath}
\usepackage{amsthm}
\usepackage{graphicx}
\usepackage[ruled]{algorithm2e}
\usepackage{float}                    
\usepackage{framed}
\usepackage{mdframed}
\usepackage{enumitem}

\newtheorem{definition}{Definition}
\newtheorem{theorem}{Theorem}
\newfloat{protocol}{tbp}{lop}
\floatname{protocol}{Protocol}
\newfloat{functionality}{tbp}{lop}
\floatname{functionality}{Functionality}

\renewcommand{\paragraph}[1]{\vspace*{2pt}\noindent\textbf{#1}}


\begin{document}
\title{Distributed HDMM: Scalable, Distributed, Accurate, and Differentially Private Query Workloads without a Trusted Curator}


\author{Ratang Sedimo}
\affiliation{\institution{University of Vermont}\country{}}
\email{rsedimo@uvm.edu}
\author{Ivoline C. Ngong}
\affiliation{\institution{University of Vermont}\country{}}
\email{kngongiv@uvm.edu}
\author{Jami Lashua}
\affiliation{\institution{University of Vermont}\country{}}
\email{jami.lashua@uvm.edu}
\author{Joseph P. Near}
\affiliation{\institution{University of Vermont}\country{}}
\email{jnear@uvm.edu}

\begin{abstract}
We present the Distributed High-Dimensional Matrix Mechanism (Distributed HDMM), a protocol for answering workloads of linear queries on distributed data that provides the accuracy of central-model HDMM without a trusted curator. Distributed HDMM leverages a secure aggregation protocol to evaluate HDMM on distributed data, and is secure in the context of a malicious aggregator and malicious clients (assuming an honest majority). Our preliminary empirical evaluation shows that Distributed HDMM can run on realistic datasets and workloads with thousands of clients in less than one minute.
\end{abstract}

\settopmatter{printfolios=true}
\maketitle




\section{Introduction}

Institutions like the U.S.\ Census Bureau and Medicare regularly release summary statistics 
about individuals, such as population statistics cross-tabulated by demographic attributes~\cite{census2010summaryfile, onthemapwebtool} 
and hospital discharge tables organized by medical condition and patient characteristics~\cite{hcupnet}. 
While such statistics are invaluable for both public policy and academic research, they can 
also reveal sensitive information---especially when combined with other publicly available 
datasets~\cite{haney2017utility,machanavajjhala2008privacy,vaidya2013identifying}. Over the last two decades, 
\emph{differential privacy} (DP)~\cite{dwork2006calibrating,dwork2014algorithmic,li2014data} 
has emerged as the premier framework for controlling privacy leakage in these kinds of data releases. 
It provides strong, formal, and quantifiable guarantees that remain valid even if an adversary 
possesses substantial auxiliary information.

Within differential privacy, a wide variety of mechanisms have been proposed to answer 
\emph{workloads of linear queries} under the central model. One notable example is the 
High-Dimensional Matrix Mechanism (HDMM)~\cite{mckenna2018optimizing}, which, for a given 
set of predicate-count or linear queries, can add a carefully calibrated amount of noise 
and exploit structural correlations across queries to minimize the overall error. 
Such \emph{central-model} mechanisms are capable of achieving optimal or near-optimal 
accuracy~\cite{acs2012differentially,barak2007privacy,ding2011differentially,hay2010boosting,li2015matrix,qardaji2014priview} 
and have proven effective on a variety of realistic workloads 
ranging from low- to high-dimensional data~\cite{li2013optimal,xiao2011differential,xiao2014dpcube}.

However, central-model mechanisms require a \emph{trusted curator} to collect and hold 
the raw data. In many settings, such as large-scale multi-institution collaborations or 
federated learning scenarios, there may be no single entity that the data holders fully trust. 
One alternative is \emph{local differential privacy} (LDP)~\cite{kasiviswanathan2011can,duchi2013local,erlingsson2014rappor,bassily2015local}, which 
eliminates the need for a centralized trusted party by having each data holder locally add 
noise to its own data. Yet, the local model typically introduces higher variance in the estimates 
and generally suffers from significantly poorer accuracy than central-model DP.

A growing body of work aims to bridge the gap between central and local models by leveraging 
secure computation protocols, such as secure aggregation or other forms of multi-party 
computation (MPC)~\cite{evans2018pragmatic}. These protocols allow a set of participants 
to compute aggregate statistics without revealing the raw data to any single party, thus 
eliminating the need for a fully trusted curator. At the same time, when used to implement differentially private algorithms, they can achieve accuracy 
comparable to that of central-model mechanisms~\cite{bater2018shrinkwrap, roy2020crypt, wang2022incshrink, roth2019honeycrisp, roth2020orchard}.

Motivated by the need to produce high-accuracy query answers \emph{without} relying on a 
fully trusted data curator, we propose the \emph{Distributed HDMM mechanism}---a distributed 
differential privacy protocol that preserves the advantages of central-model HDMM while 
operating in a secure multi-party setting.

Our approach combines the high accuracy of the central-model HDMM mechanism with the privacy and security properties of a distributed protocol. At a high level, Distributed HDMM enables each client to compute a noisy local contribution to a shared query workload, without ever transmitting raw data. Specifically, the server first computes a strategy matrix optimized for the given workload using HDMM, and broadcasts this matrix to all clients. Each client then encodes its local data into a vector representation, applies the strategy matrix to obtain a measurement, adds carefully calibrated discrete Gaussian noise, and finally encodes the result into a finite field. These encoded measurements are then securely aggregated using a communication-efficient secure aggregation protocol~\cite{bell2020secure}—ensuring that the server learns only the noisy sum of client measurements and nothing else. The server then decodes this aggregated sum, applies the HDMM postprocessing step (inverting the strategy matrix), and releases a differentially private answer to the original workload. This protocol maintains the algebraic structure of HDMM while ensuring that raw data never leaves client devices, and that even a malicious server colluding with clients cannot breach individual privacy—provided an honest majority assumption holds.

In contrast to prior distributed approaches that either introduce substantial error (as in local DP) or require additional trust assumptions (as in shuffle DP), Distributed HDMM directly mirrors the workflow of central HDMM while operating entirely in a federated setting. The key technical ingredients are the encoding step, which adapts real-valued measurements for secure aggregation, and the careful calibration of discrete Gaussian noise to account for potentially corrupted clients. Together, these design choices ensure that our protocol inherits the accuracy advantages of HDMM while providing end-to-end privacy and robustness guarantees in distributed deployments.

We perform an empirical evaluation of Distributed HDMM using workloads derived from the U.S.\ Census SF1 and Adult datasets, simulating deployments with thousands of clients to measure both runtime and accuracy. Our experiments examine the effects of client and server bandwidth limitations, network latency, and client dropouts. Results demonstrate that Distributed HDMM scales efficiently, introduces minimal computation and communication overhead, and achieves utility similar to central-model HDMM, while substantially outperforming local and shuffle-model baselines. We release our implementation as open source.\footnote{\url{https://github.com/uvm-plaid/distributed-hdmm}}

\paragraph{Contributions.}
In summary, our contributions are:
%
\begin{itemize}[leftmargin=14pt, itemsep=2pt]
\item We develop Distributed HDMM, a new distributed DP mechanism that extends the central-model HDMM to federated settings by leveraging secure aggregation, achieving near-central accuracy without a trusted curator.
\item We present security and privacy proofs for both the semi-honest and malicious threat models
\item We implement and evaluate Distributed HDMM to demonstrate that it scales to thousands of clients, runs in under a minute, and produces utility similar to central-model HDMM
\end{itemize}





\noindent

\paragraph{Paper Overview.} 
The remainder of the paper is organized as follows. Section~2 reviews background on differential privacy, the High-Dimensional Matrix Mechanism (HDMM), and secure aggregation as the cryptographic primitive underlying our distributed design. Section~3 introduces the core design of Distributed HDMM, describing how we adapt HDMM to a federated setting through vectorization, encoding, secure aggregation, and decoding. Section~4 presents the security analysis, covering semi-honest and malicious adversaries and discussing extensions with zero-knowledge input validation. Section~5 reports our empirical evaluation on U.S.\ Census SF1 and Adult datasets, demonstrating that Distributed HDMM scales to thousands of clients with minimal overhead and achieves accuracy close to central-model HDMM, while outperforming local and shuffle baselines. Section~6 concludes with a summary of contributions, practical deployment considerations, and directions for future work.

\section{Background}

\paragraph{Differential privacy.}
Differential privacy~\cite{dwork2006calibrating, dwork2014algorithmic} is a {formal privacy definition} that bounds the effect any single individual can have on the outcome of an analysis. Formally:
\begin{definition}[Differential privacy]
A {mechanism} $\mathcal{M}$ satisfies $(\epsilon, \delta)$-differential privacy if for all neighboring databases $x, x' \in \mathcal{D}$, and for all possible sets of outcomes $S$:
\[Pr[\mathcal{M}(x) \subseteq S] \leq e^\epsilon \Pr[\mathcal{M}(y) \subseteq S] + \delta\]
\end{definition}
Two databases are considered neighboring if they differ in one person's data.
In this work, we leverage a variant of differential privacy called \emph{zero-concentrated differential privacy} (zCDP)~\cite{bun2016concentrated}:
\begin{definition}[Zero-concentrated differential privacy (zCDP) {\cite{bun2016concentrated}}]
A randomized mechanism $\mathcal{M}$ satisfies $\rho$-zCDP if for all $\alpha\in(1,\infty)$ and all neighboring databases $x$ and $x'$,
\[
D_\alpha\!\big(\,\mathcal{M}(x)\,\big\|\,\mathcal{M}(x')\,\big)\;\le\;\rho\,\alpha,
\]
where $D_\alpha$ is the order-$\alpha$ Rényi divergence.
\end{definition}
To achieve zCDP, we will add random noise calibrated to the \emph{sensitivity} of the function whose output we would like to release. Sensitivity is defined as the maximum change in a function's output when one record differs between two neighboring databases. Sensitivity for vector-valued functions can be defined in terms of $L_1$ or $L_2$ norms:
\begin{definition}[L$_2$-sensitivity]
For a (vector-valued) query $q:\mathcal{X}^n\to\mathbb{R}^d$, the (global) L$_2$-sensitivity is
\[
\Delta_2(q)\;=\;\sup_{\text{neighbors }x\sim x'}\;\|\,q(x)-q(x')\,\|_2\, .
\]
\end{definition}
The most common mechanism for zCDP adds Gaussian noise calibrated to the privacy parameter $\rho$ and the $L_2$ sensitivity $\Delta_2$:
\begin{definition}[Gaussian mechanism for zCDP]
Let $q:\mathcal{D}\to\mathbb{R}^d$ have L$_2$-sensitivity $\Delta_2$. For an input database $x$, the Gaussian mechanism releases:
\[
q(x) + \mathcal{N}(0, \sigma^2)\text{ where }\sigma^2 = \frac{\Delta_2^2}{2\rho}
\]
\end{definition}
\begin{proposition}[Gaussian mechanism satisfies zCDP~\cite{bun2016concentrated}]
The Gaussian mechanism satisfies $\rho$-zCDP.
\end{proposition}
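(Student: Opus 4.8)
The plan is to reduce the claim to a direct computation of the Rényi divergence between the two Gaussian output distributions and then to control that quantity using the $L_2$-sensitivity bound. For neighboring databases $x$ and $x'$, the mechanism produces $\mathcal{M}(x) = \mathcal{N}(q(x), \sigma^2 I)$ and $\mathcal{M}(x') = \mathcal{N}(q(x'), \sigma^2 I)$: two multivariate Gaussians with the \emph{same} covariance $\sigma^2 I$ and means separated by the vector $q(x) - q(x')$. So the heart of the argument is the standard closed form for the Rényi divergence between two such Gaussians, after which the zCDP bound follows by substituting the definitions of $\Delta_2$ and $\sigma^2$.

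First I would establish the one-dimensional case, since the noise is added independently per coordinate and the Rényi divergence of a product of independent pairs is the sum of the per-coordinate divergences. Writing $D_\alpha(P\|Q) = \frac{1}{\alpha-1}\log \int p(z)^\alpha q(z)^{1-\alpha}\, dz$ for the densities $p,q$ of $\mathcal{N}(\mu_1,\sigma^2)$ and $\mathcal{N}(\mu_2,\sigma^2)$, the exponent of the integrand is a quadratic in $z$ whose leading coefficient is $\alpha + (1-\alpha) = 1$. Completing the square in $z$ and evaluating the resulting Gaussian integral---this is the one genuinely computational step---yields
\[
\int p(z)^\alpha q(z)^{1-\alpha}\, dz \;=\; \exp\!\left(\frac{\alpha(\alpha-1)(\mu_1-\mu_2)^2}{2\sigma^2}\right),
\]
so that $D_\alpha(P\|Q) = \frac{\alpha(\mu_1-\mu_2)^2}{2\sigma^2}$. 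Summing over the $d$ coordinates replaces $(\mu_1-\mu_2)^2$ with $\|q(x)-q(x')\|_2^2$, giving $D_\alpha(\mathcal{M}(x)\|\mathcal{M}(x')) = \frac{\alpha\|q(x)-q(x')\|_2^2}{2\sigma^2}$.

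To finish, I would invoke the definition of $L_2$-sensitivity, namely $\|q(x)-q(x')\|_2 \le \Delta_2$ for all neighbors, to obtain $D_\alpha(\mathcal{M}(x)\|\mathcal{M}(x')) \le \frac{\alpha\Delta_2^2}{2\sigma^2}$, and then substitute $\sigma^2 = \Delta_2^2/(2\rho)$, which collapses the bound to exactly $\rho\alpha$. Since this holds for every $\alpha \in (1,\infty)$ and every pair of neighbors, the definition of $\rho$-zCDP is satisfied. The main obstacle is purely the Gaussian integral in the middle step: one must keep careful track of the cross term when completing the square, since it is precisely the residual $\alpha(\alpha-1)(\mu_1-\mu_2)^2/(2\sigma^2)$ surviving the integration that produces the divergence, and a sign or bookkeeping slip there would change the calibration constant. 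Everything else is substitution of definitions.
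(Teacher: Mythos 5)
Your proposal is correct and follows essentially the same route as the cited source (Bun and Steinke), which the paper itself defers to without reproducing a proof: compute the closed-form R\'enyi divergence $\frac{\alpha\|q(x)-q(x')\|_2^2}{2\sigma^2}$ between equal-covariance Gaussians via coordinate-wise additivity and a completion of squares, then bound by sensitivity and substitute $\sigma^2 = \Delta_2^2/(2\rho)$ to get exactly $\rho\alpha$. The Gaussian integral, the residual term $\alpha(\alpha-1)(\mu_1-\mu_2)^2/(2\sigma^2)$, and the final calibration are all handled correctly.
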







Like other variants of differential privacy, zCDP is closed under composition and post-processing, meaning that we can bound the total privacy cost of multiple uses of a mechanism, and that it is not possible to undo the privacy protection in a post-processing step.
\begin{lemma}[Composition~\cite{bun2016concentrated}]
If $\mathcal{M}_1$ satisfies $\rho_1$-zCDP and (possibly adaptively) $\mathcal{M}_2$ satisfies $\rho_2$-zCDP, then the joint mechanism $(\mathcal{M}_1,\mathcal{M}_2)$ satisfies $(\rho_1+\rho_2)$-zCDP.
\end{lemma}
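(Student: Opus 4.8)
The plan is to work directly from the definition of zCDP in terms of R\'enyi divergence and to exploit the multiplicative structure of that divergence under composition. Fix neighboring databases $x, x'$ and an order $\alpha \in (1, \infty)$. Writing $P$ and $Q$ for the distributions of the joint output $(\mathcal{M}_1, \mathcal{M}_2)$ run on $x$ and on $x'$ respectively, the goal reduces to showing $D_\alpha(P \,\|\, Q) \le (\rho_1 + \rho_2)\alpha$; since this will hold for every $\alpha$ and every neighboring pair, it establishes $(\rho_1+\rho_2)$-zCDP by definition.

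The central tool is the chain rule for R\'enyi divergence. First I would rewrite the divergence in its exponentiated form, $e^{(\alpha-1)D_\alpha(P\|Q)} = \sum_{y_1, y_2} P(y_1,y_2)^\alpha\, Q(y_1,y_2)^{1-\alpha}$, and factor each joint density into a marginal for the first output times a conditional for the second, $P(y_1,y_2) = P_1(y_1)\, P_{2|1}(y_2 \mid y_1)$ and analogously for $Q$. This factorization splits the double sum into an outer sum over $y_1$ and an inner sum over $y_2$, where the inner sum is exactly $e^{(\alpha-1)D_\alpha(P_{2|y_1}\|Q_{2|y_1})}$.

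Next I would invoke the two zCDP hypotheses. For each fixed first output $y_1$, the mechanism $\mathcal{M}_2$ operates on neighboring inputs with $y_1$ supplied as auxiliary input, so its $\rho_2$-zCDP guarantee bounds the inner divergence by $\rho_2\alpha$ uniformly in $y_1$. This uniform bound lets me pull the factor $e^{(\alpha-1)\rho_2\alpha}$ outside the outer sum, leaving $\sum_{y_1} P_1(y_1)^\alpha Q_1(y_1)^{1-\alpha} = e^{(\alpha-1)D_\alpha(P_1\|Q_1)} \le e^{(\alpha-1)\rho_1\alpha}$ by the $\rho_1$-zCDP guarantee of $\mathcal{M}_1$. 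Multiplying the two factors, taking logarithms, and dividing by $\alpha - 1 > 0$ yields the claimed bound.

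The main obstacle is the adaptive case, and it is conceptual rather than computational: the argument above requires the conditional divergence to be bounded by $\rho_2\alpha$ for \emph{every} realization $y_1$ of the first output, not merely on average. This is precisely what the adaptive formulation of zCDP delivers---$\mathcal{M}_2$ must satisfy $\rho_2$-zCDP for each fixed value of the auxiliary input it receives---so the pointwise bound is available and the factorization goes through unchanged. The non-adaptive case is the special instance in which $P_{2|1}$ does not depend on $y_1$, so that $P$ and $Q$ are product measures and the additivity of R\'enyi divergence over products gives the result immediately.
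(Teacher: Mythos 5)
The paper does not prove this lemma at all---it is stated as a background fact and attributed to Bun and Steinke~\cite{bun2016concentrated}, so there is no in-paper proof to compare against. Your argument is correct and is essentially the canonical proof from that reference: factor the joint density into marginal times conditional, bound the inner (conditional) R\'enyi divergence pointwise in $y_1$ using the adaptive $\rho_2$-zCDP guarantee, pull the uniform factor $e^{(\alpha-1)\rho_2\alpha}$ out of the outer sum, and bound what remains by the $\rho_1$-zCDP guarantee of $\mathcal{M}_1$. You also correctly isolate the one conceptually delicate point---that adaptivity demands a bound on the conditional divergence for \emph{every} realization $y_1$, not merely in expectation---which is exactly where naive attempts go wrong. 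The only cosmetic caveat is that your sums over $y_1, y_2$ presume discrete output spaces; for general outputs one replaces them with integrals against a dominating measure (or phrases the step via the data-processing/chain-rule machinery for R\'enyi divergence), but nothing in the structure of the argument changes.
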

\begin{lemma}[Post-processing~\cite{bun2016concentrated}]
If $\mathcal{M}$ satisfies $\rho$-zCDP and $f$ is any (possibly randomized) mapping, then $f\!\circ\!\mathcal{M}$ satisfies $\rho$-zCDP.
\end{lemma}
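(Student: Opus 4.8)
The plan is to derive the lemma from a single structural property of the R\'enyi divergence: the \emph{data-processing inequality} (DPI), which asserts that for any distributions $P,Q$ and any randomized map (Markov kernel) $K$, $D_\alpha(K(P)\,\|\,K(Q)) \le D_\alpha(P\,\|\,Q)$ for every $\alpha\in(1,\infty)$. Granting this, the lemma is immediate: fix neighboring databases $x,x'$ and any $\alpha\in(1,\infty)$, and observe that $(f\circ\mathcal{M})(x)$ is exactly the image of $\mathcal{M}(x)$ under the kernel induced by $f$. Hence
\[
D_\alpha\!\big((f\circ\mathcal{M})(x)\,\big\|\,(f\circ\mathcal{M})(x')\big)\;\le\;D_\alpha\!\big(\mathcal{M}(x)\,\big\|\,\mathcal{M}(x')\big)\;\le\;\rho\,\alpha,
\]
where the first inequality is the DPI and the second is the assumption that $\mathcal{M}$ is $\rho$-zCDP. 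Since $x,x'$ and $\alpha$ were arbitrary, $f\circ\mathcal{M}$ satisfies $\rho$-zCDP.

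The substantive work is thus establishing the DPI for $D_\alpha$, which I would do in two stages. First, for a deterministic map $f$, I would write $H_\alpha(P\|Q)=\mathbb{E}_Q\big[(dP/dQ)^\alpha\big]$ and note that $D_\alpha=\tfrac{1}{\alpha-1}\log H_\alpha$ is monotone increasing in $H_\alpha$ for $\alpha>1$, so it suffices to show $H_\alpha$ does not increase under $f$. The likelihood ratio of the pushforwards satisfies $\tfrac{d f_\ast P}{d f_\ast Q}=\mathbb{E}_Q\big[\tfrac{dP}{dQ}\mid f\big]$, so by conditional Jensen applied to the convex function $t\mapsto t^\alpha$ (here $\alpha>1$),
\[
H_\alpha(f_\ast P\,\|\,f_\ast Q)=\mathbb{E}_Q\!\Big[\big(\mathbb{E}_Q[\tfrac{dP}{dQ}\mid f]\big)^\alpha\Big]\;\le\;\mathbb{E}_Q\big[(\tfrac{dP}{dQ})^\alpha\big]=H_\alpha(P\|Q).
\]

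Second, to cover a \emph{randomized} $f$, I would represent it as a deterministic map supplied with independent randomness, $f(y)=g(y,R)$ with $R$ independent of the input. Because the coordinate $R$ has the same law regardless of whether the input originated from $x$ or $x'$, adjoining it leaves the divergence unchanged---R\'enyi divergence is additive over independent product coordinates and the $R$-factor contributes $D_\alpha(R\|R)=0$---so $D_\alpha((\mathcal{M}(x),R)\,\|\,(\mathcal{M}(x'),R))=D_\alpha(\mathcal{M}(x)\,\|\,\mathcal{M}(x'))$; applying the deterministic DPI to $g$ then finishes the reduction. I expect the main obstacle to be the fiberwise contraction step of the first stage---making the conditional-expectation identity for the pushed-forward likelihood ratio and the ensuing Jensen bound rigorous in the general (continuous, non-dominated) case---whereas the randomization reduction and the final two-line deduction are routine once the DPI is secured.
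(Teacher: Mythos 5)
The paper does not prove this lemma at all---it is quoted as a known result with a citation to Bun and Steinke---and your proposal is correct and follows the same route as that cited source: post-processing for zCDP is immediate from the data-processing inequality for R\'enyi divergence, which you establish soundly via conditional Jensen's inequality for deterministic maps and the adjoin-independent-randomness reduction for randomized ones (noting also that the zCDP hypothesis already forces absolute continuity, so the non-dominated case you worry about is vacuous here). Nothing further is needed.
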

Finally, zCDP implies $(\epsilon, \delta)$-DP, and a $\rho$-zCDP guarantee can be converted into an $(\epsilon, \delta)$-DP guarantee (with some loss in precision):
\begin{proposition}[Conversion to $(\varepsilon,\delta)$-DP~\cite{bun2016concentrated}]
If $\mathcal{M}$ satisfies $\rho$-zCDP, then for any $\delta>0$, $\mathcal{M}$ is $(\varepsilon,\delta)$-DP with
\[
\varepsilon\;=\;\rho\;+\;2\sqrt{\rho\,\ln(1/\delta)}\, .
\]
\end{proposition}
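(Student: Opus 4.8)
The plan is to reduce the $(\varepsilon,\delta)$-DP guarantee to a one-sided tail bound on the \emph{privacy loss random variable}, and to control that tail using the moment bound that the zCDP hypothesis supplies directly. Fix neighboring databases $x,x'$ and write $P=\mathcal{M}(x)$ and $Q=\mathcal{M}(x')$ for the output distributions, and define the privacy loss $Z=\log\frac{dP}{dQ}(Y)$ with $Y\sim P$. The first step is the standard observation that it suffices to show $\Pr_{Y\sim P}[Z>\varepsilon]\le\delta$: splitting any event $S$ according to the ``bad set'' $B=\{y:\log\frac{dP}{dQ}(y)>\varepsilon\}$, on the complement $B^{c}$ we have $dP\le e^{\varepsilon}\,dQ$ pointwise, so $P(S)\le e^{\varepsilon}Q(S)+P(B)$, and bounding $P(B)\le\delta$ yields exactly the DP inequality.

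The second step connects the Rényi-divergence hypothesis to the moment generating function of $Z$. Writing $\alpha=t+1$, a direct computation gives the identity $\mathbb{E}_{Y\sim P}\big[e^{tZ}\big]=\mathbb{E}_{Q}\big[(dP/dQ)^{\alpha}\big]=e^{(\alpha-1)D_\alpha(P\|Q)}$, so the $\rho$-zCDP bound $D_\alpha(P\|Q)\le\rho\alpha$ immediately gives $\mathbb{E}[e^{tZ}]\le e^{\rho t(t+1)}=e^{\rho t^{2}+\rho t}$ for every $t>0$. This exhibits $Z$ as essentially sub-Gaussian with a small positive mean correction, which is precisely the structure needed for a sharp Chernoff tail bound.

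Third, I would apply the Chernoff method: for any $t>0$, $\Pr_P[Z>\varepsilon]\le e^{-t\varepsilon}\,\mathbb{E}[e^{tZ}]\le e^{\rho t^{2}+(\rho-\varepsilon)t}$, and then optimize the exponent over $t$. The minimizer is $t^{\star}=(\varepsilon-\rho)/(2\rho)$, which is admissible (strictly positive) exactly when $\varepsilon>\rho$; substituting it yields $\Pr_P[Z>\varepsilon]\le e^{-(\varepsilon-\rho)^{2}/(4\rho)}$. Setting this bound equal to $\delta$ and solving the resulting quadratic for $\varepsilon$ produces $\varepsilon=\rho+2\sqrt{\rho\ln(1/\delta)}$, the claimed value; combining this with the reduction of the first step completes the argument.

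The main obstacle to watch is keeping the optimization legitimate: the Chernoff step only tightens when $t^{\star}>0$, so I must check that $\varepsilon>\rho$ holds for the chosen $\varepsilon$, which it does since $\delta<1$ forces $\ln(1/\delta)>0$ and hence $2\sqrt{\rho\ln(1/\delta)}>0$. The remaining care is measure-theoretic bookkeeping in the density-ratio identities, namely handling the Radon–Nikodym derivative $dP/dQ$ and the event $B$ cleanly, including the possibility that $Q$ assigns zero mass to part of $B$ (where the privacy loss is $+\infty$ and the mass is absorbed into $P(B)$); this is routine and does not affect the stated constants.
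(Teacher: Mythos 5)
Your proof is correct: the paper states this proposition without proof, importing it directly from Bun and Steinke~\cite{bun2016concentrated}, and your argument---reducing $(\varepsilon,\delta)$-DP to a tail bound on the privacy loss, converting the R\'enyi-divergence hypothesis into the MGF bound $\mathbb{E}[e^{tZ}]\le e^{\rho t^2+\rho t}$, and optimizing the Chernoff exponent at $t^\star=(\varepsilon-\rho)/(2\rho)$---is precisely the standard proof given in that cited reference. The details (the admissibility check $\varepsilon>\rho$, the resulting bound $e^{-(\varepsilon-\rho)^2/(4\rho)}$, and the measure-theoretic handling of the set where $Q$ vanishes) are all handled correctly.
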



\begin{algorithm}
\SetKwInOut{Input}{Input}
\SetKwInOut{Output}{Output}
\Input{Database $D \in \mathcal{D}$, query workload matrix $W$, privacy parameter $\rho$.}
\Output{A $\rho$-zCDP private approximate answer for $W$ on $D$.}

$A \gets \textsf{optimize}(W)$\tcp*{compute strategy matrix}
$x \gets \textsf{vectorize}(D)$\tcp*{vectorize database}
$M \gets Ax$\tcp*{compute measurement}
$\hat{M} \gets M + \mathcal{N}\Big(\frac{\Delta(A)^2}{2 \rho}\Big)$\tcp*{add noise}
$a \gets A^{-1}\hat{M}$\tcp*{reconstruct workload answer}
$\Return\;\; a$
\caption{High-Dimensional Matrix Mechanism (HDMM)~\cite{mckenna2018optimizing}.}
\label{alg:hdmm}
\end{algorithm}

\paragraph{HDMM.}
The High-Dimensional Matrix Mechanism~\cite{mckenna2018optimizing}, summarized in Algorithm~\ref{alg:hdmm}, answers a workload of linear queries with differential privacy. It works by representing the queries as a matrix and the database as a vector, then computing the matrix-vector product and adding Laplace noise to find differentially private query answers. However, HDMM makes several important optimizations that result in optimal accuracy. First, to optimize accuracy, HDMM uses an optimized \emph{strategy matrix} in place of the original query workload, then reconstructs the workload's answers from the result. Second, to scale to high-dimensional data and large query workloads, HDMM proposes an efficient \emph{implicit representation} for both the workload and the strategy matrix. HDMM represents the state-of-the-art in answering query workloads with high accuracy, but it is designed for the central model of differential privacy, and requires the data to be collected in one place by a data curator.

\paragraph{Secure aggregation.}
\emph{Secure aggregation} is a class of secure multiparty computation protocols for summing vectors, originally designed for federated machine learning~\cite{mcmahan2017communication, bonawitz2019towards, kairouz2019advances}. These protocols require only a few rounds of communication and scale much better than general-purpose multiparty computation protocols~\cite{bonawitz2017practical, bell2020secure, bell2022acorn, stevens2022secret, so2021turbo, kadhe2020fastsecagg, fereidooni2021safelearn}, enabling deployment in large-scale privacy-preserving federated learning systems~\cite{truex2019hybrid, sav2020poseidon}. Recent approaches~\cite{bell2020secure, bell2022acorn, stevens2022secret} scale to thousands or millions of participants and support both semi-honest and malicious adversaries. Formally, the secure aggregation functionality (Functionality~\ref{func:secagg}) takes as input vectors from all clients and outputs only their sum to the server, without revealing any individual contribution.

\begin{functionality}
\begin{mdframed}[align=center, userdefinedwidth=.45\textwidth]
\textbf{Parameters:}
\begin{itemize}[itemsep=0pt, topsep=0pt, leftmargin=12pt]
\item Parties: one server $S$ and $n$ clients $c_1 \dots c_n$.
\item Each client $c_i$ holds a length-$k$ vector of field elements $x_i \in \mathbb{F}_p^k$.
\end{itemize}
\medskip    

\textbf{Functionality:}
\begin{enumerate}[itemsep=0pt, topsep=0pt, leftmargin=16pt]
\item Each client $c_i$ sends $x_i$ to $\mathcal{F}_\text{agg}$
\item $\mathcal{F}_\text{agg}$ sends the sum $\sum_{i=1}^n x_i$ to $S$
\end{enumerate}

\end{mdframed}
  \caption{Secure aggregation functionality $\mathcal{F}_{\text{agg}}$.}
  \label{func:secagg}
\end{functionality}

\section{Threat Models}
\label{sec:threat-models}

We consider two threat models: semi-honest, and a variant of malicious security without correctness associated with secure aggregation protocols. Both settings consider a static set of parties \emph{corrupted} by the adversary: $C \subset \{S\} \cup \{c_1, \dots, c_n\}$. Our protocols are secure even when both the server $S$ and a $\theta$ fraction of the clients are corrupted.

\paragraph{Semi-honest security.}
In the semi-honest (also called honest-but-curious) threat model, all parties follow the protocol, and do not change their inputs or outputs. The adversary cannot affect the correctness of the result in this setting, since all parties follow the protocol. However, the adversary may try to learn information about the honest parties' inputs by observing messages received by the corrupt parties. Single-server secure aggregation protocols generally provide semi-honest security when both the server and a fraction of clients are corrupted. We prove semi-honest security for Distributed HDMM in Section~\ref{sec:security--privacy}.

\paragraph{Malicious security.}
In the malicious (also called active) threat model, each corrupted party can deviate arbitrarily from the protocol, including by changing their inputs, their message contents, and their outputs.
Single-server secure aggregation protocols~\cite{bell2022acorn, bonawitz2017practical, bell2020secure} typically do not ensure correctness of their outputs in the presence of a corrupted server in the malicious model, since the server is the single source of the final result and is allowed to output an arbitrary value when corrupted. These protocols do provide confidentiality for the clients' inputs, even in the presence of a malicious adversary---even when both the server and a fraction of the clients deviate from the protocol, the adversary cannot learn more about the honest clients' inputs than is revealed by the total sum.

We prove the same kind of malicious security for Distributed HDMM in Section~\ref{sec:security--privacy}. Since our approach relies on single-server secure aggregation protocols, it cannot guarantee correctness of the output when the server is corrupted, but it does provide confidentiality for honest clients' inputs.

This variant of malicious security is weaker than the traditional definition, which does include correctness of the output. However, this threat model is often considered to be a good match for practical deployments, in which a large company may operate an aggregation server to collect statistics about private data from a large number of customers. In this context, the company is highly incentivized to produce the correct result, since their goal is to use the collected statistics for business purposes; the customers, in contrast, primarily want to be protected from malicious behavior by the company that may violate their privacy.

\paragraph{Single- vs. multi-server aggregation.}
Highly-efficient secure aggregation protocols exist for the multi-server setting, where multiple non-colluding servers collaboratively compute the aggregate result. In practice, however, deployed applications of secure aggregation protocols typically involve collection of data by a single organization, and identifying additional organizations which verifiably do not collude with the first one is difficult. Our presentation and experiments focus on single-server setting, since it is more challenging and typically more applicable than the multi-server setting, but our approach extends in a trivial way to multi-server aggregation protocols.

\paragraph{Input validation.}
Malicious clients may have no incentive to provide correct inputs to the protocol, and may destroy the final output by providing garbage inputs.
Our approach is compatible with existing work that uses a zero-knowledge (zk) proof framework to prove that all clients' inputs are within a reasonable range.
Our malicious-secure protocol is compatible with several existing solutions for client input validation, including those proposed by ACORN~\cite{bell2022acorn} and EiFFeL~\cite{roy2022eiffel}. Employing any of these methods ensures that we can guarantee correctness of client inputs, in case some clients behave maliciously.

\section{Distributed HDMM}

\begin{table}
  \centering
  \begin{tabular}{c l}
    \hline
    \textbf{Parameter} & \textbf{Description}\\
    \hline
    $S$ & Server\\
    $n$ & Number of clients\\
    $c_1, \dots, c_n$ & Clients\\
    $\theta$ & Upper bound on fraction of corrupted clients\\
    $\rho$ & Privacy parameter (zCDP)\\
    $\gamma$ & Scaling parameter\\
    \hline
  \end{tabular}
  \caption{List of global parameters for the Distributed HDMM approach.}
  \label{tab:parameters}
\end{table}

\begin{figure*}
    \centering
    \includegraphics[width = 0.9\textwidth]{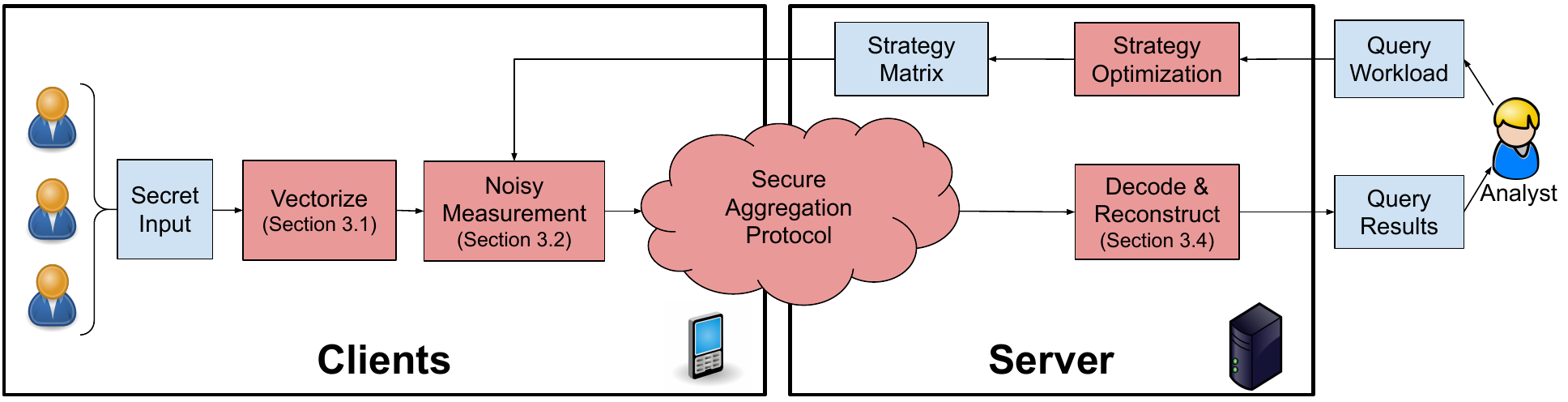}
    \caption{Overview of the Distributed HDMM approach.}
\end{figure*}

\begin{functionality}
\begin{mdframed}[align=center, userdefinedwidth=.45\textwidth]
\textbf{Parameters:}
\begin{itemize}[itemsep=0pt, topsep=0pt, leftmargin=12pt]
\item Parties: one server $S$ and $n$ clients $c_1 \dots c_n$.
\item Each client $c_i$ holds a multiset of tuples $I_i \in \mathcal{D}$.
\item The server $S$ holds a query workload matrix $W$.
\end{itemize}
\medskip    

\textbf{Functionality:}
\begin{enumerate}[itemsep=0pt, topsep=0pt, leftmargin=16pt]
\item Each client $c_i$ sends $I_i$ to $\mathcal{F}_\text{HDMM}$
\item $S$ sends $W$ to $\mathcal{F}_\text{HDMM}$
\item $\mathcal{F}_\text{HDMM}$ computes $D = \bigcup_{i=1}^n I_i$
\item $\mathcal{F}_\text{HDMM}$ computes the HDMM output $a$ according to Algorithm~\ref{alg:hdmm}, using $D$, $W$, and $\rho$
\item $\mathcal{F}_\text{HDMM}$ sends $a$ to $S$
\end{enumerate}

\end{mdframed}
  \caption{Distributed HDMM functionality $\mathcal{F}_{\text{HDMM}}$.}
  \label{func:hdmm}
\end{functionality}

\begin{protocol}
  \begin{mdframed}[align=center, userdefinedwidth=.45\textwidth]
\textbf{Parameters:}
\begin{itemize}[itemsep=0pt, topsep=0pt, leftmargin=10pt]
\item Parties: one server $S$ and $n$ clients $c_1 \dots c_n$.
\item Each client $c_i$ holds a multiset of tuples $I_i \in \mathcal{D}$.
\item The server $S$ holds a query workload matrix $W$.
\end{itemize}
\medskip    

\textbf{Output:} {A differentially private approximate answer $a$ for $W$ on the distributed data $x$. }
\medskip

{\textbf{Round 1:} The server $S$ computes the strategy matrix $A \in \mathbb{R}^{d \times k} =
  \textsf{optimize}(W)$ following HDMM, and broadcasts $A$ to the
  clients.}
\medskip

{\textbf{Round 2:} Each client $c_i$:}
\begin{enumerate}[topsep=1pt, itemsep=0pt, leftmargin=14pt]
\item Computes the $L_2$ sensitivity $\Delta_2$ of the strategy matrix $A$, following HDMM.
\item Computes the \emph{vector representation} of their data (\S\ref{sec:vectorization}): \\ $x_i \in \mathbb{R}^d = \textsf{vectorize}(I_i)$
\item Computes their \emph{measurement} using the matrix-vector product of $A$ and $v_i$, following HDMM: \\ $m_i \in \mathbb{R}^k = A x_i$
\item Computes their \emph{noisy measurement} (\S\ref{sec:encoding}): \\ $\hat{m_i} \in \mathbb{F}_p^k = \textsf{encode}(m_i, \Delta_2)$
\item Runs the aggregation protocol $\mathcal{F}_{\text{agg}}$ with the server $S$ and other clients, providing $\hat{m_i}$ as input (\S\ref{sec:aggregation}).
\end{enumerate}
\medskip

{\textbf{Round 3:} The server $S$:}
\begin{enumerate}[topsep=1pt, itemsep=0pt, leftmargin=14pt]
\item Receives $\hat{M} \in \mathbb{F}_p^k = \sum_{i} \hat{m_i}$ as the output of $\mathcal{F}_{\text{agg}}$.
\item Decodes the sum of measurements (\S\ref{sec:decoding}):\\ $\hat{M_d} \in \mathbb{R}^k = \textsf{decode}(\hat{M})$
\item Computes the approximate answer to $W$ using the inverse of the strategy matrix $A$, following HDMM: $a = A^{-1} \hat{M_d}$
\item Outputs $a$.
\end{enumerate}

\end{mdframed}
  \caption{Distributed HDMM Protocol $\prod_\text{HDMM}$.}
  \label{prot:hdmm}
\end{protocol}

The complete protocol for Distributed HDMM appears in Protocol~\ref{prot:hdmm}. The protocol involves $n$ clients $c_i$ and one server $S$. Client $c_i$ holds a set of tuples $I_i$. The protocol concludes by outputting a differentially private answer to the specified query workload.

The structure of the protocol follows the structure of centralized HDMM. The main difference is that each client computes a measurement using their own data, then adds noise to their measurement. The protocol aggregates these noisy measurements to arrive at the final result.




\subsection{Vectorization}
\label{sec:vectorization}

As in HDMM, we assume a single-table relational schema $R(A_1, \dots, A_d)$ where $attr(R)$ denotes the set of attributes of $R$, and that each attribute $A_i$ has a finite domain $dom(A_i)$. The full domain of $R$ is the product of the attributes' domains, and has size $|dom(R)| = \prod_i |dom(A_i)|$. HDMM represents an instance of the schema $R$ as a length-$d$ vector $x_I \in \mathbb{N}^d$. The vector is indexed by tuples $t \in dom(R)$ such that $x_I(t) = \sum_{t' \in dom(R)} \mathbb{I}[t = t']$---in other words, $x_I(t)$ counts the number of occurrences of the tuple $t$ in the original instance $I$. Dwork and Roth~\cite{dwork2014algorithmic} call this \emph{histogram representation} of the data.

In our setting, client $c_i$ holds a subset $I_i$ of the total instance, so $I_i \subset I$ and $\bigcup_i I_i = I$. Each client vectorizes their subset in exactly the same manner as in centralized HDMM, to obtain a vector $x_i \in \mathbb{N}^n$ that counts the number of occurrences of each tuple in the domain \emph{for that client only}. In the case where each client holds one tuple, each vector $x_i$ is a one-hot encoding of the tuple.
By construction, the sum of these vectors is equal to the vectorization of the original instance: $\sum_i x_i = x_I$.

\subsection{Encoding}
\label{sec:encoding}

The encoding algorithm encodes and adds noise to the client's measurement vector, outputting a vector of field elements for aggregation. This algorithm takes as an input, each client's response to the query and outputs a noisy encoded value representing each client's response.
The algorithm has to encode because the secure aggregation protocol $\mathcal{F}_{\text{agg}}$ requires its input to be vector of finite field elements.

The encoding algorithm (Algorithm~\ref{alg:encoding}) receives a measurement vector $v$ from the HDMM matrix, a scaling factor $\gamma$ and $L_2$ sensitivity $\Delta_2$. The HDMM measurement vector $v$, is scaled by $\gamma$ and truncated to an integer, and the product is perturbed by adding discrete Gaussian noise $\mathcal{N}_{\mathbb{Z}}\Big(\frac{\theta \gamma^2\Delta_2^2}{2 n \rho}\Big)$. The $\theta$ here represents the fraction of potentially corrupted parties. The $\theta$ scales up the noise to compensate for the potentially lost appropriate level of noise needed to guarantee differential privacy.

Following the addition of Gaussian noise, the perturbed matrix product, is a noisy vector of integers. Next, the algorithm transforms the integers into elements of the finite field $\mathbb{F}_p$, to ensure that our protocol can adequately handle both negative and positive numbers. Given a prime $p > 2$, we transform integers in $[-(p-1)/2, (p-1)/2]$ to field elements in $[0,p-1]$ by computing $\hat{v_s} \mod p$. Our implementation throws an error for inputs outside of this range, since they would yield incorrect results.

\begin{algorithm}
\SetKwInOut{Input}{Input}
\SetKwInOut{Output}{Output}
\Input{Measurement vector $v \in \mathbb{R}^k$, scaling factor $\gamma$, $L_2$ sensitivity $\Delta_2$.}
\Output{Encoded noisy measurement vector $\hat{v_s} \in \mathbb{F}_p^k$}
$v_s \gets \lfloor \gamma v \rfloor$\tcp*{scale, truncate encoded vector}
$\hat{v_s} \gets v_s + \mathcal{N}_{\mathbb{Z}}\Big(\frac{\gamma^2\Delta_2^2}{2 (1 - \theta) n \rho}\Big)$\tcp*{add discrete Gaussian noise}
$\Return\;\; \hat{v_s} \mod p$\tcp*{encode noisy value in $\mathbb{F}_p^k$}

\caption{Encoding procedure for a single client's noisy measurement.}
\label{alg:encoding}
\end{algorithm}



    



\subsection{Aggregation}
\label{sec:aggregation}





In our experiments, we instantiate $\mathcal{F}_{\text{agg}}$ with the Bell et al.~\cite{bell2020secure} protocol for secure aggregation. This protocol was designed for federated learning settings and includes both semi-honest and malicious-secure variants. The Bell protocol masks each client’s input vector using correlated randomness shared with other clients, such that the masks cancel out when all contributions are aggregated. As a result, the server learns only the sum of the noisy measurement vectors, and nothing about any individual client’s input. 


\subsection{Decoding}
\label{sec:decoding}

The decoding algorithm takes in results as a vector of field elements and decodes them into floating-point numbers. It is the inverse of the encode algorithm. The input is an encoded result vector $v_s \in \mathbb{F}_p^k$, and the output is a vector $v \in \mathbb{R}^k$.

\begin{algorithm}
\SetKwInOut{Input}{Input}
\SetKwInOut{Output}{Output}
\Input{An encoded vector $v_s \in \mathbb{F}_p^k$}
\Output{A decoded vector $v \in \mathbb{R}^k$}
$v_d \gets \textsf{DecodeInt}(v[i]) \textbf{ for } v[i] \in v_s$\tcp*{decode}
$\;\;\;\;\text{where } \textsf{DecodeInt}(v_i) = 
\begin{cases}
      x & \text{if }x \leq \frac{p-1}{2} \\
      x-p & \text{if }x > \frac{p-1}{2}
    \end{cases}\,$\\
$v \gets x / \gamma$ \tcp*{descale}
$\Return \;\; v$\;
\caption{Decoding procedure.}
\label{alg:decode}
\end{algorithm}

The decode algorithm (Algorithm~\ref{alg:decode}) performs this transformation. The first step is to map field elements to signed integers. The \textsf{DecodeInt} function maps field elements between 0 and $\frac{p-1}{2}$ to positive integers, and field elements between $\frac{p-1}{2}$ to negative integers; this is the inverse of the process used to encode signed integers as field elements. The second step is to transform signed integers into floating-point numbers by de-scaling to invert the process used in encoding, by dividing the decoded vector by $\gamma$.

\subsection{Complexity Analysis}

We now analyze the additional complexity of our distributed protocols relative to the central-model HDMM baseline. In the central setting, the server computes the query workload directly on the global dataset, while in the distributed setting clients must participate in secure aggregation and local encoding of their contributions. We break down the additional computational and communication costs for both server and clients, based on the use of the protocol due to Bell et al.~\cite{bell2020secure} to perform secure aggregation.

\paragraph{Computation cost.}
For each \emph{client}, the additional cost arises from secure aggregation as well as from preparing their contribution to the measurement. Each client must vectorize and encode its local measurement (a length-$k$ vector), at a cost of $O(k)$. The computation cost of secure aggregation for the client is $O(\log(n)^2 + k \log(n))$---so our approach scales linearly in the size of the strategy, but logarithmically in the number of clients. For the \emph{server}, the dominant additional cost relative to central-model HDMM is due to secure aggregation: $O(n \log(n)^2 + n k \log(n))$, so our protocol is linear in both the size of the strategy and the number of clients.

\paragraph{Communication cost.}
For each \emph{client}, the communication cost resulting from secure aggregation is $O(\log(n)^2 + k)$.
For the \emph{server}, the communication cost is $O(n \log(n)^2 + n k)$. In both cases, the additional communication overhead compared to local or shuffle model alternatives is logarithmic.

\section{Security \& Privacy Analysis}
\label{sec:security--privacy}



This section provides the formal proofs of security and privacy for Protocol~\ref{prot:hdmm}. In Section~\ref{sec:privacy_analysis}, we prove that Protocol~\ref{prot:hdmm} satisfies differential privacy. We prove security (confidentiality) for inputs in the context of a semi-honest adversary in Section~\ref{sec:semihonest_security}, and for a malicious adversary in Section~\ref{sec:maliciou_security}.


\subsection{Privacy Analysis}
\label{sec:privacy_analysis}

In Protocol~\ref{prot:hdmm}, each client adds a small amount of noise ($O(\frac{1}{n})$), which is not sufficient to ensure differential privacy. When the noisy measurements are summed, however, the noise samples add up to the correct noise for differential privacy. For samples from the continuous Gaussian, this result is immediate; Distributed HDMM uses discrete Gaussian noise, however, since the vector to be aggregated consist of field elements in $\mathbb{F}_p$. Fortunately, a sum of discrete Gaussian samples can also be used to satisfy differential privacy, and if the scaling factor $\gamma$ is large enough, then the guarantee is very nearly the same as in the continuous case.
\begin{lemma}[Distributed discrete Gaussian~\cite{kairouz2021distributed}]
  Let $\sigma \geq 1$ and $X_i \sim \mathcal{N}_\mathbb{Z}(0, \sigma^2)$
  independently for each $i$. Let $Z_n = \sum_{i=0}^n X_i$. An
  algorithm that adds $Z_n$ to a sensitivity-$\Delta$ query satisfies
  $\rho'$-zero concentrated differential privacy, for:
  \[ \rho' = \frac{\Delta^2}{2n\sigma^2} + 5 \sum_{k=1}^{n-1}e^{-4\pi^2\sigma^2\frac{k}{k+1}}\]
  \label{lem:discgauss}
\end{lemma}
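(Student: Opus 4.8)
The plan is to reduce the claim to a bound on a Rényi divergence and then to control the ``non-Gaussianity'' of the sum $Z_n$ by Fourier analysis on $\mathbb{Z}$. First I would recall that for an additive-noise mechanism $\mathcal{M}(D) = q(D) + Z_n$ on neighboring databases whose (integer-valued) query answers differ by at most $\Delta$, the assertion of $\rho'$-zCDP is precisely that $D_\alpha(Z_n \,\|\, Z_n + \Delta) \le \rho'\alpha$ for every $\alpha \in (1,\infty)$, and symmetrically for the reverse shift. So it suffices to upper bound this divergence by $\alpha\rho'$ with $\rho'$ as in the statement.

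Second, the central object is the law of $Z_n = \sum_i X_i$. Since $Z_n$ is a convolution of $n$ independent copies of $\mathcal{N}_\mathbb{Z}(0,\sigma^2)$, its probability mass function admits a clean characteristic-function representation: writing $\varphi(t) = \sum_{x \in \mathbb{Z}} e^{itx}\,\Pr[X_1 = x]$, the characteristic function of $Z_n$ is $\varphi(t)^n$, and by Poisson summation each factor agrees with the wrapped continuous-Gaussian characteristic function up to theta-function tails of order $e^{-2\pi^2\sigma^2 m^2}$ at the $m$-th frequency. The key observation is that if $Z_n$ were \emph{exactly} a discrete Gaussian $\mathcal{N}_\mathbb{Z}(0, n\sigma^2)$, then the single-shot discrete Gaussian zCDP guarantee would give $\frac{\Delta^2}{2n\sigma^2}$-zCDP, which produces the first term of $\rho'$.

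Third, the correction term must account for the fact that a convolution of discrete Gaussians is \emph{not} itself a discrete Gaussian. I would control this inductively: adding the $(k+1)$-th summand to a variance-$k\sigma^2$ approximate discrete Gaussian introduces a discrepancy governed by the reduced variance $\frac{k\sigma^2 \cdot \sigma^2}{(k+1)\sigma^2} = \frac{k}{k+1}\sigma^2$, which is exactly the exponent appearing in $e^{-4\pi^2\sigma^2\frac{k}{k+1}}$. Accumulating these per-step discrepancies over $k = 1, \dots, n-1$ and folding them into the divergence bound yields the additive penalty $5\sum_{k=1}^{n-1} e^{-4\pi^2\sigma^2\frac{k}{k+1}}$, where the doubled frequency factor $4\pi^2$ and the constant $5$ absorb the worst-case contribution of the leading nonzero Fourier modes.

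The main obstacle is this third step: bounding the Rényi divergence when the noise is not exactly Gaussian. In the continuous case the sum of Gaussians is again Gaussian and the argument is immediate, so the entire difficulty is quantifying, uniformly in $\alpha$, how far $Z_n$ deviates from $\mathcal{N}_\mathbb{Z}(0, n\sigma^2)$ and showing this deviation contributes only the small, exponentially decaying correction. The Poisson-summation tail estimates together with the telescoping of reduced variances are what make this tractable, and the hypothesis $\sigma \ge 1$ is precisely what keeps each correction term small.
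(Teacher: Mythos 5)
There is nothing to compare against inside the paper: Lemma~\ref{lem:discgauss} is imported verbatim from Kairouz, Liu, and Steinke~\cite{kairouz2021distributed} and is never proved here---the paper only applies it in the proof of Theorem~\ref{thm:privacy}. Measured against the proof in that cited source, your outline is the same argument: reduce $\rho'$-zCDP of the additive mechanism to bounding $D_\alpha(Z_n \,\|\, Z_n + \Delta)$ for all $\alpha$; note that if $Z_n$ were exactly $\mathcal{N}_\mathbb{Z}(0, n\sigma^2)$, the discrete-Gaussian zCDP bound of Canonne--Kamath--Steinke would give precisely the first term $\frac{\Delta^2}{2n\sigma^2}$; and charge the failure of discrete Gaussians to be closed under convolution to an additive correction, built by induction on the summands, where the $k$-th step is governed by the harmonic-mean variance $\bigl(\frac{1}{k\sigma^2} + \frac{1}{\sigma^2}\bigr)^{-1} = \frac{k}{k+1}\sigma^2$. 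That is exactly how the source organizes its proof, including the origin of the $\frac{k}{k+1}$ in the exponent.

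Two things keep your proposal a correct plan rather than a proof. First, the decisive step is only named, not done: agreement of characteristic functions ``up to theta-function tails'' does not by itself bound any R\'enyi divergence. What the induction actually needs is (i) a \emph{pointwise} ratio bound $e^{-\tau_k} \le \Pr[X_1+\cdots+X_{k+1}=z]\,/\,\Pr[\mathcal{N}_\mathbb{Z}(0,(k+1)\sigma^2)=z] \le e^{\tau_k}$ for all $z$---a max-divergence bound, which is what Poisson summation ultimately delivers---and (ii) the fact that such a pointwise $e^{\pm\tau}$ ratio inflates the zCDP parameter by only an additive $O(\tau)$, uniformly in $\alpha$. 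Both of these are the real content of the lemma, and neither follows from ``telescoping reduced variances.'' Second, your treatment of the constants (``the doubled frequency factor $4\pi^2$ and the constant $5$ absorb the worst-case contribution'') is reverse-engineering, not derivation: those numbers must fall out of the explicit theta-series estimates, which is also where the hypothesis $\sigma \ge 1$ is consumed, and versions of the source state this correction with different constants and exponents than the restatement here---so an argument tuned to reproduce exactly ``$5$'' and ``$4\pi^2$'' is aiming at a moving target rather than deriving a bound.
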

\begin{theorem}
  The output of Protocol~\ref{prot:hdmm} satisfies $\rho'$-zero concentrated differential privacy for:
  \[ \rho' = \rho + \kappa\]
where
\[\kappa = 5 \sum_{k=1}^{n(1-\theta)-1}e^{-4\pi^2(\frac{\gamma^2\Delta_2^2}{2 (1 - \theta) n \rho})\frac{k}{k+1}}\]
  \label{thm:privacy}
\end{theorem}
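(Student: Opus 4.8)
The plan is to reduce the entire protocol to a single application of the Distributed discrete Gaussian lemma (Lemma~\ref{lem:discgauss}) followed by post-processing. First I would identify the effective query whose answer the aggregate reconstructs. By the vectorization property $\sum_i x_i = x_I$ and linearity of $A$, the clients' scaled measurements sum to a rounding of $\gamma A x_I$, so the quantity the server observes before decoding is $\sum_i \lfloor \gamma A x_i \rfloor$ plus the aggregated discrete Gaussian noise. The $L_2$ sensitivity of this scaled measurement query is $\gamma \Delta_2$: changing one individual's record affects exactly one client's input, perturbing that client's measurement $A x_i$ by at most $\Delta_2$ in $L_2$ norm, and scaling by $\gamma$ multiplies this by $\gamma$ (the per-client floor contributes only a bounded rounding term, which I would absorb into the HDMM sensitivity bound).

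Second I would pin down how many independent noise terms are guaranteed. Because up to a $\theta$ fraction of clients are corrupted and may omit or tamper with their noise, the only contributions I can rely on come from the at least $n(1-\theta)$ honest clients, each adding an independent sample from $\mathcal{N}_{\mathbb{Z}}(0,\sigma^2)$ with $\sigma^2 = \frac{\gamma^2\Delta_2^2}{2(1-\theta)n\rho}$. Corrupt clients' inputs are chosen independently of the honest data, so conditioned on them they act as a fixed additive shift—an application of the post-processing lemma—that cannot degrade the R\'enyi divergence. This is precisely why the per-client variance is inflated by the factor $\tfrac{1}{1-\theta}$: the honest-only noise sum then has total variance $n(1-\theta)\,\sigma^2 = \frac{\gamma^2\Delta_2^2}{2\rho}$, matching the continuous Gaussian variance needed for $\rho$-zCDP at sensitivity $\gamma\Delta_2$.

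Third I would invoke Lemma~\ref{lem:discgauss} with $n_{\mathrm{eff}} = n(1-\theta)$ honest noise terms, per-term variance $\sigma^2$ as above, and sensitivity $\Delta = \gamma\Delta_2$. The first term evaluates to $\frac{(\gamma\Delta_2)^2}{2\,n(1-\theta)\,\sigma^2} = \rho$ after cancellation, and the second term is exactly $\kappa = 5\sum_{k=1}^{n(1-\theta)-1} e^{-4\pi^2\sigma^2 \frac{k}{k+1}}$, yielding $\rho' = \rho + \kappa$ for the noisy aggregate. Finally, the server's remaining operations—$\textsf{decode}$ and multiplication by $A^{-1}$—depend on the transcript only through this noisy aggregate, so by the post-processing lemma the released answer $a$ inherits $\rho'$-zCDP.

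The hard part will be the two modeling steps rather than the arithmetic: (i) justifying that per-client flooring does not inflate the $L_2$ sensitivity beyond $\gamma\Delta_2$ (equivalently, ensuring the mod-$p$ wraparound never occurs on in-range inputs, so that $\sum_i \lfloor \gamma A x_i \rfloor$ is recovered exactly by decoding); and (ii) formally arguing that corrupted clients can neither reduce the effective honest noise count below $n(1-\theta)$ nor exploit their data-independent inputs to weaken the guarantee. Once these are settled, the application of Lemma~\ref{lem:discgauss} and post-processing is direct.
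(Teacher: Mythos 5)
Your proposal is correct and takes essentially the same route as the paper's proof: both reduce to a single application of Lemma~\ref{lem:discgauss} with $n(1-\theta)$ honest noise samples, per-sample variance $\sigma^2 = \frac{\gamma^2\Delta_2^2}{2(1-\theta)n\rho}$, and sensitivity $\Delta = \gamma\Delta_2$, so the first term cancels to $\rho$ and the second term is exactly $\kappa$. If anything, your write-up is more careful than the paper's, which does not explicitly invoke post-processing for the decode/$A^{-1}$ steps and silently assumes sensitivity $\gamma\Delta_2$ without addressing the flooring and mod-$p$ wraparound issues you flag as your ``hard part'' (i).
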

\begin{proof}

Each client's measurement vector $\hat{m}_i$ (as computed by \textsf{encode}) has sensitivity $\gamma \Delta_2$ and has independent discrete Gaussian noise sampled from:
\[\mathcal{N}_{\mathbb{Z}}\Big(\frac{\gamma^2\Delta_2^2}{2 (1 - \theta) n \rho}\Big)\]
The server $S$ receives the sum $\hat{M} = \sum_i \hat{m}_i$. In the worst case, the server can subtract the noise of the $n \theta$ corrupted clients, so $n(1-\theta)$ noise samples remain. By Lemma~\ref{lem:discgauss}, the sum of the measurement vectors in the presence of $n\theta$ corrupted clients satisfies $\rho'$-zCDP, where:
\begin{align*}
\rho' &= \frac{\Delta^2}{2n(1-\theta)\sigma^2} + 5 \sum_{k=1}^{n(1-\theta)-1}e^{-4\pi^2\sigma^2\frac{k}{k+1}}\\
&= \frac{\Delta^2}{2n(1-\theta)\frac{\gamma^2\Delta_2^2}{2 (1 - \theta) n \rho}} + 5 \sum_{k=1}^{n(1-\theta)-1}e^{-4\pi^2(\frac{\gamma^2\Delta_2^2}{2 (1 - \theta) n \rho})\frac{k}{k+1}}\\
&= \frac{\Delta^2}{\frac{\gamma^2\Delta_2^2}{\rho}} + 5 \sum_{k=1}^{n(1-\theta)-1}e^{-4\pi^2(\frac{\gamma^2\Delta_2^2}{2 (1 - \theta) n \rho})\frac{k}{k+1}}\\
&= \rho + 5 \sum_{k=1}^{n(1-\theta)-1}e^{-4\pi^2(\frac{\gamma^2\Delta_2^2}{2 (1 - \theta) n \rho})\frac{k}{k+1}}\\
\end{align*}
\end{proof}
The $\kappa$ term of the privacy cost in Theorem~\ref{thm:privacy} corresponds to the ``extra cost'' of using discrete Gaussian noise samples rather than continuous ones. Fortunately, this extra term shrinks exponentially with the scaling parameter $\gamma$, so it can easily be made negligible by setting $\gamma$ large enough. For example, when $\rho = 0.1$ and $n=5000$, setting $\gamma = 100$ results in $\kappa = 9.39 \times 10^{-86}$. In our experiments, we set $\gamma = 1000$, which results in a $\kappa$ that is too small to calculate using 64-bit floating-point numbers.


\subsection{Semi-Honest Security}
\label{sec:semihonest_security}

In addition to satisfying differential privacy, Protocol~\ref{prot:hdmm} must not reveal anything new to any party except for the final result. This security result follows directly from the security of the secure aggregation protocol used. The clients communicate their private data only through the aggregation protocol $\mathcal{F}_{\text{agg}}$; if $\mathcal{F}_{\text{agg}}$ is secure, then no client learns anything about any other client's input, and the server learns only the sum of these inputs.

\begin{theorem}[Security of $\prod_\text{HDMM}$]
The protocol $\prod_\text{HDMM}$ (Protocol~\ref{prot:hdmm}) securely realizes the functionality $\mathcal{F}_\text{HDMM}$ (Functionality~\ref{func:hdmm}) in the $\mathcal{F}_\text{agg}$-hybrid model, in the presence of semi-honest adversaries.
\label{thm:semi-honest-security}
\end{theorem}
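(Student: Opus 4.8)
The plan is to give a standard simulation-based proof: for every static semi-honest adversary corrupting a set $C \subset \{S\} \cup \{c_1,\dots,c_n\}$, I construct a simulator $\mathcal{S}$ that, given only the corrupted parties' inputs and the output that $\mathcal{F}_\text{HDMM}$ delivers to them, produces a transcript indistinguishable from the adversary's real view of $\prod_\text{HDMM}$. Because we work in the $\mathcal{F}_\text{agg}$-hybrid model, the entire communication pattern collapses to two observable events: the server broadcasting the strategy matrix $A$ in Round~1, and the single output $\hat{M} = \sum_i \hat{m}_i$ that $\mathcal{F}_\text{agg}$ delivers to the server in Round~3. The per-client encoding and noise addition are purely local, and no client ever receives a message from $\mathcal{F}_\text{agg}$. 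I treat the workload $W$ (hence $A = \textsf{optimize}(W)$) as public auxiliary input, since it encodes only the queries being asked and not any client's sensitive records; this makes $A$ available to $\mathcal{S}$ in every case.

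I then split into two cases. First, when the server is honest and only a subset of clients is corrupted, the corrupted clients' joint view consists of nothing but $A$, their own inputs $\{I_i\}_{i \in C}$, and their own local randomness (the discrete Gaussian samples drawn in \textsf{encode}). Since they obtain no output from $\mathcal{F}_\text{agg}$, the simulator simply runs the honest client code on the known inputs, producing a perfectly distributed view. Second, and more interestingly, when the server is corrupted (possibly colluding with a set of clients), $\mathcal{S}$ receives the ideal output $a$ from $\mathcal{F}_\text{HDMM}$ and must fabricate the aggregate $\hat{M}$ that the server sees from $\mathcal{F}_\text{agg}$. Here I use the fact that the server's post-processing $a = A^{-1}\,\textsf{decode}(\hat{M})$ is an invertible, deterministic map on the representable range: \textsf{decode} is the bijective inverse of the field encoding, and multiplying by $\gamma$ recovers exactly the integer vector, so $\hat{M}$ can be reconstructed from $a$ as the field encoding of $\gamma\, A\, a$. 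The simulator outputs this $\hat{M}$, together with the locally computed contributions of the corrupted clients; consistency of the two is automatic, since the honest clients' aggregate contribution is forced to be $\hat{M} - \sum_{i\in C}\hat{m}_i$ and is never inspected individually by the adversary.

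The main obstacle is not the bookkeeping of the simulator but reconciling two distributional gaps. The first is the invertibility claim above: I must argue that \textsf{decode} followed by $A^{-1}$ loses no information on the support reached by honest executions, so that recovering $\hat{M}$ from $a$ is exact rather than approximate; this requires care with the truncation $\lfloor \gamma v \rfloor$ and the signed-to-field correspondence, which hold precisely within the range for which the protocol is defined. The second, subtler gap is that the idealized HDMM algorithm (Algorithm~\ref{alg:hdmm}) specifies \emph{continuous} Gaussian noise, whereas $\prod_\text{HDMM}$ aggregates \emph{discrete} Gaussian samples, so the output $a$ is not identically distributed in the two worlds. I expect to close this either by defining $\mathcal{F}_\text{HDMM}$ to emit the discrete-Gaussian output actually realized by the protocol, or by appealing to the statistical closeness already implied by Lemma~\ref{lem:discgauss}, whose extra term $\kappa$ is negligible for the chosen $\gamma$; under either reading the simulated view is statistically indistinguishable from the real one. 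With these two points settled, indistinguishability follows immediately, and the argument relies only on the input/output behavior of the ideal $\mathcal{F}_\text{agg}$, which by definition reveals to the adversary nothing from Round~2 beyond the single aggregate.
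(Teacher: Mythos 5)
Your core simulation strategy coincides with the paper's proof: both work in the $\mathcal{F}_\text{agg}$-hybrid model, exploit the fact that the ideal aggregation reveals only the sum $\hat{M}$ (so honest clients' individual contributions never appear in the adversary's view and can be fabricated to be consistent with $\hat{M}$), and recover $\hat{M}$ from the ideal output $a$ by inverting the linear Round-3 post-processing, i.e.\ $\hat{M}$ is the field encoding of $\gamma A a$ --- exactly the paper's $\textsf{Hyb}_3$ step. The presentational difference is that the paper runs a four-step hybrid chain from the real execution down to the simulator, whereas you construct the simulator directly with a case split on whether the server is corrupted; the paper's $\textsf{Hyb}_4$ (replacing contributions to $\mathcal{F}_\text{agg}$ with random values constrained to sum to $\hat{M}$) plays the role of your observation that the honest aggregate is ``forced'' and never individually inspected. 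Where you genuinely go beyond the paper is in flagging the distributional mismatch between $\mathcal{F}_\text{HDMM}$ --- which, per Algorithm~\ref{alg:hdmm}, adds \emph{continuous} Gaussian noise --- and the protocol, which outputs a scaled, truncated, discrete-Gaussian-noised value; the paper's $\textsf{Hyb}_3$ asserts this swap ``maintains indistinguishability by definition,'' which is not literally true, since the real output is supported on a lattice and the ideal output has a continuous density. Your first proposed repair --- redefining $\mathcal{F}_\text{HDMM}$ so that it computes the discretized output the protocol actually produces --- is the right one, and with it your simulation is perfect. Your fallback repair does not work, however: Lemma~\ref{lem:discgauss} bounds a R\'enyi divergence for zCDP accounting between neighboring \emph{inputs}, and says nothing about statistical closeness between the discrete and continuous noise distributions; indeed no lattice-supported distribution can be statistically close to a continuous one (their total variation distance is $1$), so indistinguishability cannot be rescued that way. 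You should therefore commit to the redefined-functionality reading rather than presenting the two options as interchangeable.
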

\begin{proof}
We show the existence of a polynomial-time simulator via a hybrid argument~\cite{evans2018pragmatic}. We assume that $U = \{c_1, \dots, c_n\}$ is the set of clients, $S$ is the server, and $C \subset U \cup \{S\}$ is the set of corrupted clients. $I_i$ is the input of client $c_i$ and $V_i$ is the view of client $c_i$. Let $\pi = \prod_\text{HDMM}$ and $\mathcal{F} = \mathcal{F}_\text{HDMM}$. We need to show the existence of a simulator $\textsf{Sim}$ such that:
\[ \textsf{Real}_\pi(\kappa; I) \equiv
\textsf{Ideal}_{\mathcal{F}, \textsf{Sim}}(\kappa; \{I_i \mid c_i \in C \}) \]
Where both \textsf{Real} and \textsf{Ideal} output views of the corrupt parties $\{V_i \mid c_i \in C\}$. The view $V_i$ of $c_i$ contains $c_i$'s private input, its random tape, and all messages received during the protocol.

We proceed via a hybrid argument, beginning with the real protocol and ending at the simulator. At each step, we argue that the views produced by the new hybrid are indistinguishable from those produced by the previous one.
\begin{description}[leftmargin=18pt]
\item[$\textsf{Hyb}_1$] This hybrid is identical to $\textsf{Real}_\pi(\kappa; I)$
\item[$\textsf{Hyb}_2$] In this hybrid, we introduce a simulator $\textsf{Sim}$ that has access to \emph{all} inputs $I$. The simulator runs a full simulation of the protocol $\pi$, and is thus indistinguishable from $\pi$.
\item[$\textsf{Hyb}_3$] In this hybrid, the simulator $\textsf{Sim}$ replaces the computed value $a$ (Protocol~\ref{prot:hdmm}, round 3, step 3) with the supplied output of the ideal functionality, and $\hat{M}_d$ and $\hat{M}$  (Protocol~\ref{prot:hdmm}, round 3, steps 1-2) with results computed from $a$ by inverting the computation of the protocol. These changes maintain indistinguishability by definition. The computation in round 3 is invertible because all operations are linear.
\item[$\textsf{Hyb}_4$] In this hybrid, the simulator $\textsf{Sim}$ replaces the inputs to $\mathcal{F}_\text{agg}$ for corrupted clients (Protocol~\ref{prot:hdmm}, round 2, step 5) with random values consistent with the output of $\mathcal{F}_\text{agg}$. Specifically, $\textsf{Sim}$ generates $\{\hat{m}_i \mid i \in C\}$ uniformly at random such that $\sum_{c_i \in U} \hat{m}_i + \sum_{c_i \in C} \hat{m}_i = \hat{M}$.
\end{description}
The distribution of the last hybrid can be computed with the simulator's inputs $\{I_i \mid i \in C\}$ and $a$.
\end{proof}

\subsection{Malicious Security}
\label{sec:maliciou_security}

As described in Section~\ref{sec:threat-models}, single-server secure aggregation protocols typically ensure confidentiality, but not correctness, in the presence of a malicious adversary. The protocol $\Pi_{\textsc{HDMM}}$, by instantiating $\mathcal{F}_{\text{agg}}$ with a malicious-secure aggregation protocol, also ensures confidentiality (but not correctness) in the presence of a malicious adversary. The argument is largely the same as the proof of Theorem~\ref{thm:semi-honest-security}, since the confidentiality of the protocol rests primarily on the confidentiality provided by $\mathcal{F}_\text{agg}$.

\paragraph{Corrupt clients.}
Corrupt clients may compute their inputs to $\mathcal{F}_\text{agg}$ incorrectly (Protocol~\ref{prot:hdmm}, round 2, step 5), including by adjusting or eliminating the noise added during the encoding step. These deviations will result in incorrect results from the protocol, but do not harm confidentiality of honest clients' inputs, as long as $\mathcal{F}_\text{agg}$ ensures confidentiality in the presence of a malicious adversary.

\paragraph{Corrupt server.}
A corrupt server may produce an incorrect strategy matrix $A$ (Protocol~\ref{prot:hdmm}, round 1) or compute the final output $a$ incorrectly (Protocol~\ref{prot:hdmm}, round 3). Either of these deviations will result in incorrect output from the protocol, but will not harm confidentiality of honest clients' inputs. Modifying $A$ may change the sensitivity of the clients' measurements, but clients compute this sensitivity locally, so the output of $\mathcal{F}_\text{agg}$ will satisfy differential privacy even when the strategy matrix is chosen maliciously.

\section{Evaluation}

We empirically evaluate Distributed HDMM to assess its scalability, efficiency, and accuracy in realistic federated settings. Using workloads derived from the U.S.\ Census SF1 and Adult datasets, we show that Distributed HDMM scales efficiently to tens of thousands of clients, introduces only modest computational and communication overhead, and achieves accuracy nearly indistinguishable from central-model HDMM while significantly outperforming local and shuffle models. 
%
Our experiments are designed to answer three key research questions:
\begin{itemize}[leftmargin=14pt]
    \item \textbf{RQ1}: Can Distributed HDMM can scale to thousands of clients while maintaining practical runtime?
    \item \textbf{RQ2}: How do network restrictions such as bandwidth and latency affect performance?
    \item \textbf{RQ3}: How does the protocol impact the accuracy of query answers compared to central, local, and shuffle-model baselines?
\end{itemize}
This section begins by describing our experiment setup; the remaining subsections describe results and answers to these research questions.

\begin{figure}
\centering
\includegraphics[width=0.5\textwidth]{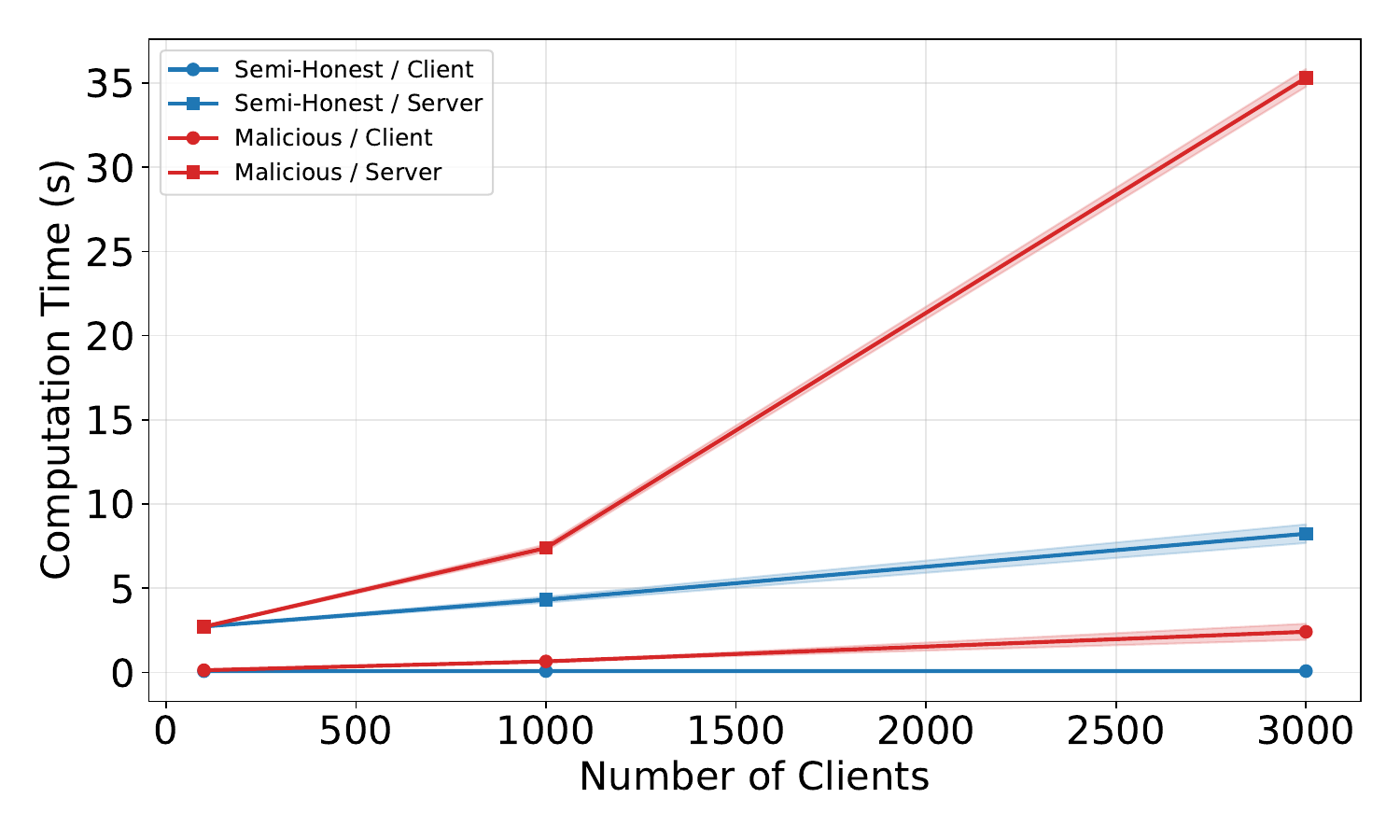} 
\caption{Server and Client (Average) Computation Time (Bandwidth = Unlimited)}
\label{fig:client-comp-time-sbw-unlimited}
\end{figure}







\begin{figure}
\centering
\includegraphics[width=0.5\textwidth]{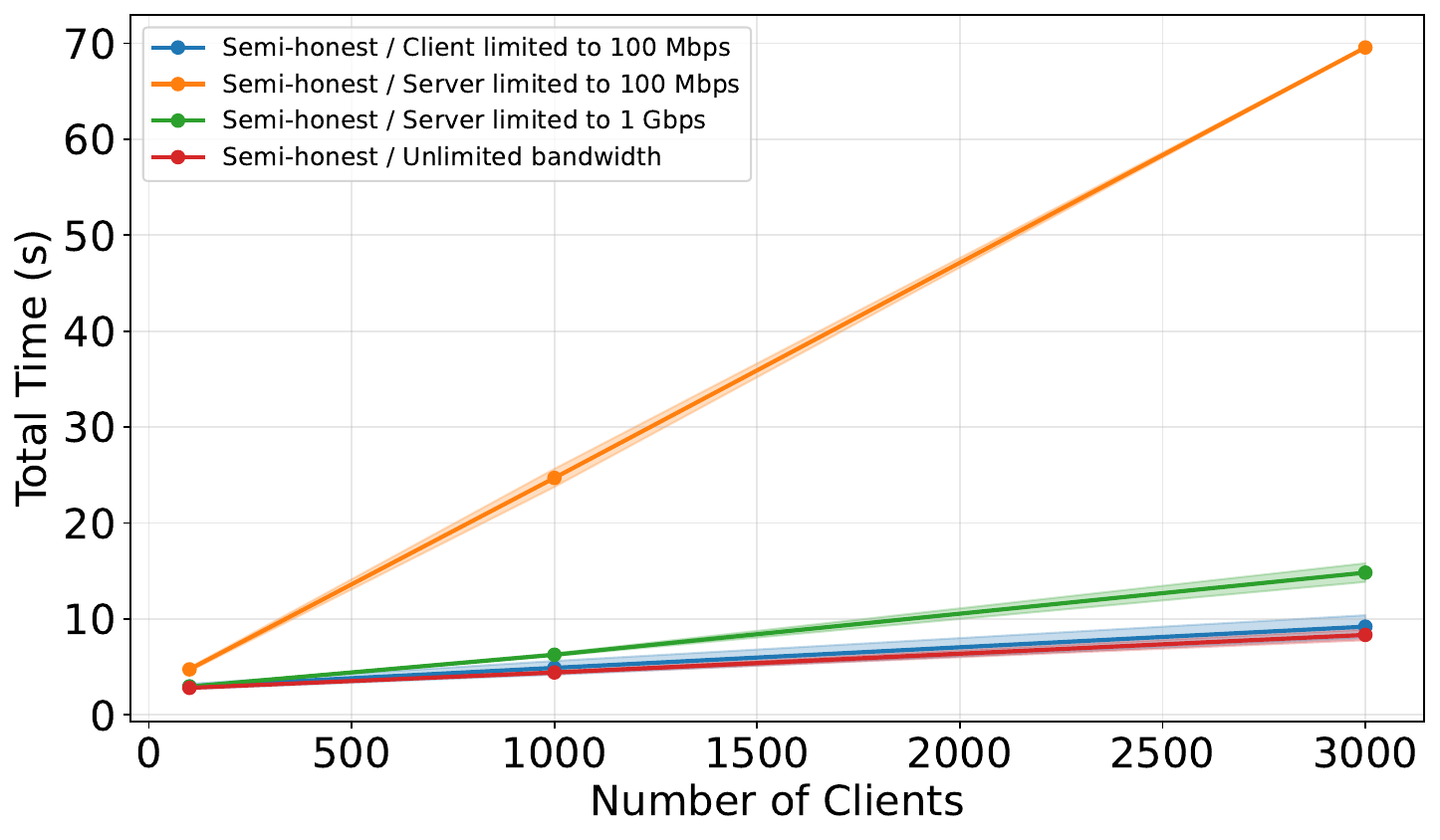} 
\caption{Total Protocol Running Time (s), semi-honest security}
\label{fig:total-time-semihonest}
\end{figure}

\begin{figure}
\centering
\includegraphics[width=0.5\textwidth]{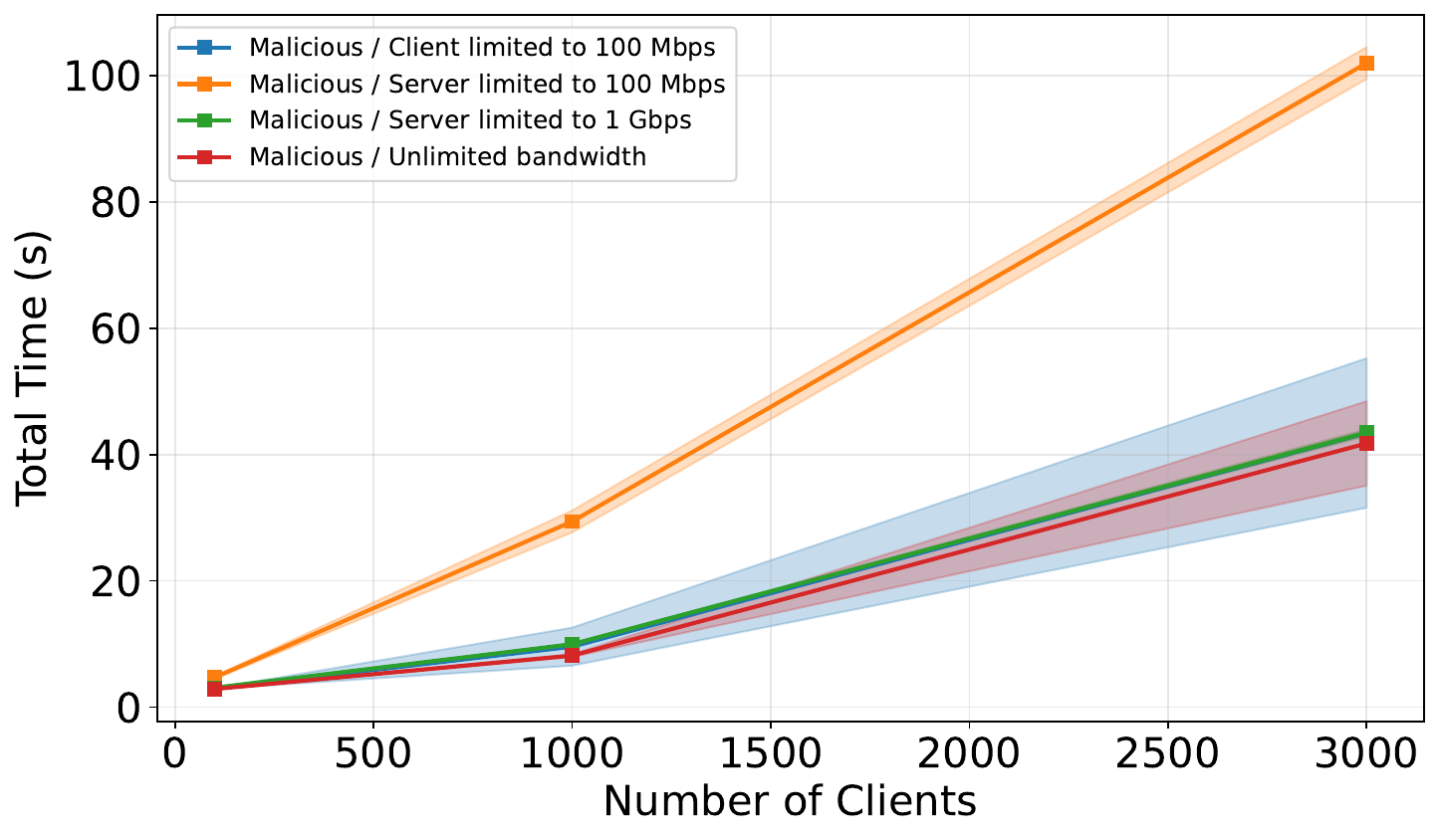} 
\caption{Total Protocol Running Time (s), malicious security}
\label{fig:total-time-malicious}
\end{figure}




\begin{figure}
\centering
\includegraphics[width=0.5\textwidth]{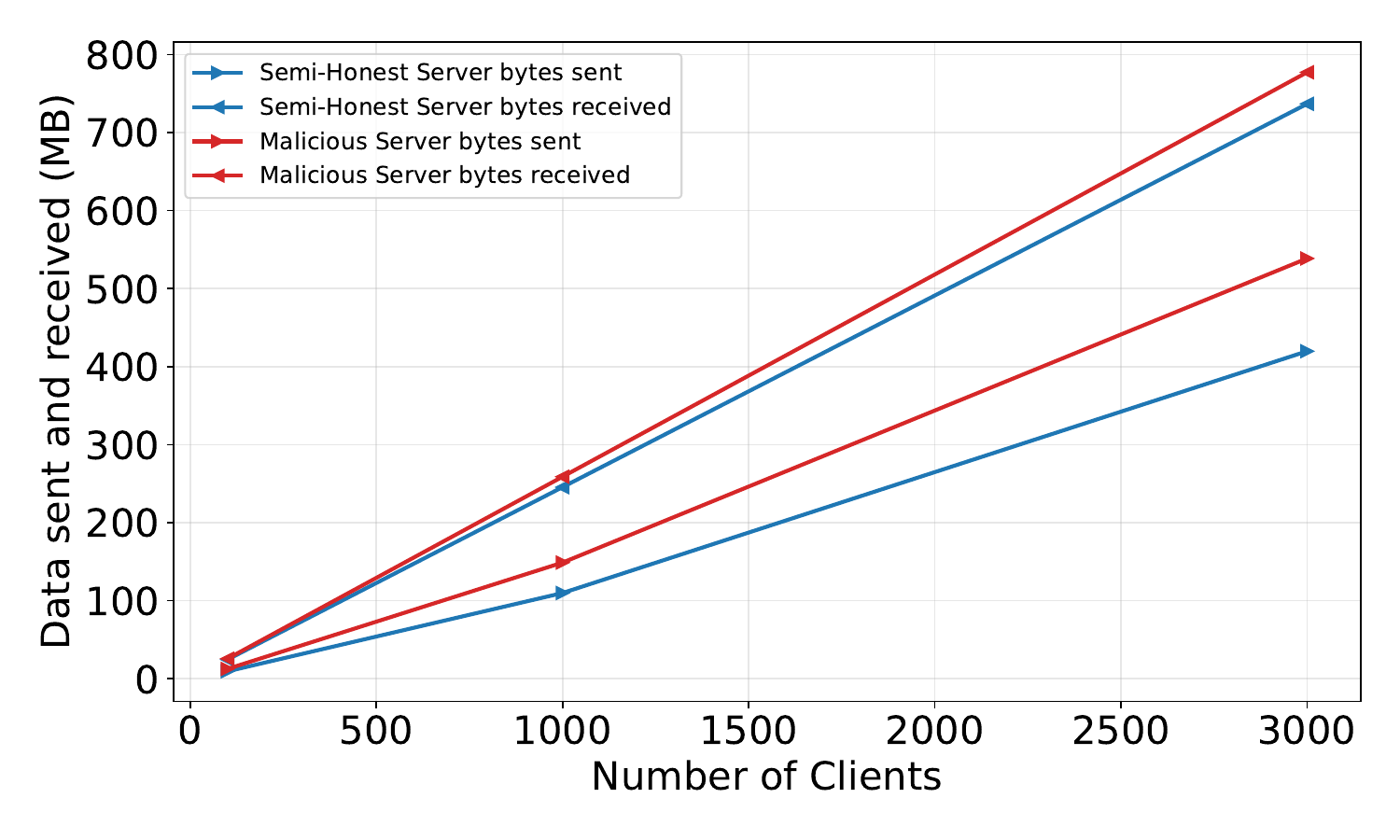} 
\caption{Server Communication Cost: sent and received traffic, in megabytes (MB)}
\label{fig:server-bytes}
\end{figure}

\begin{figure}
\centering
\includegraphics[width=0.5\textwidth]{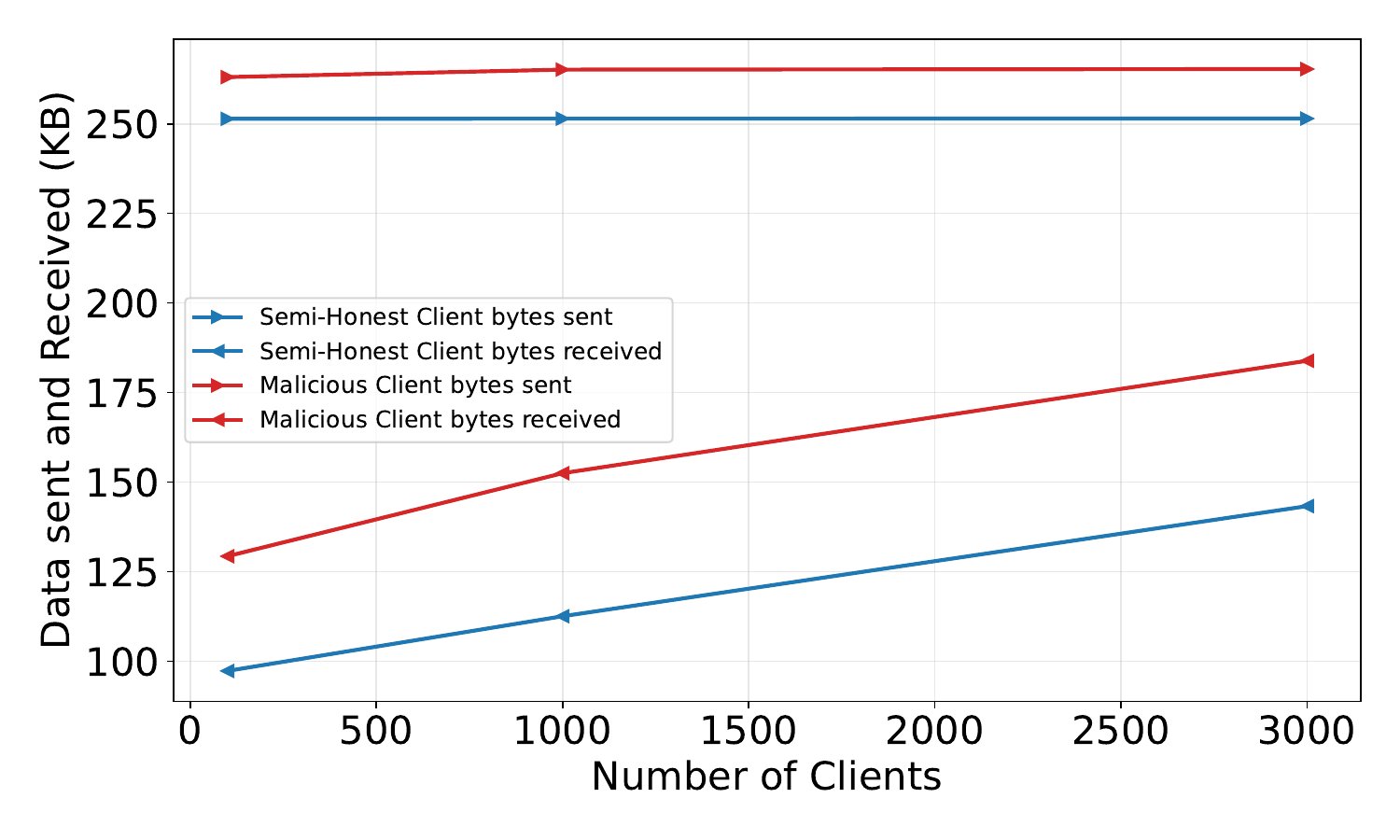} 
\caption{Average Client Communication Cost: sent and received traffic, in kilobytes (KB)}
\label{fig:client-bytes}
\end{figure}


\subsection{Implementation \& Experiment Setup}

Our empirical evaluation is designed to determine whether Distributed HDMM can scale to realistic workloads with data distributed across thousands of clients. We implemented the protocol using the \textsc{Olympia} simulation framework for secure protocols~\cite{ngong2023olympia}, which enables controlled experiments with thousands of simulated clients on a single physical machine. To represent query workloads and perform workload optimization, we used the existing Python implementation of the central-model HDMM released by McKenna et al.~\cite{hdmm_impl}. For secure aggregation, we instantiated $\mathcal{F}_{\text{agg}}$ with the Bell et al.~\cite{bell2020secure} protocol, which is communication-efficient and provides malicious security under an honest majority assumption. 

We evaluated Distributed HDMM on workloads derived from the U.S.\ Census Bureau Summary File 1 (SF1). The SF1 workload contains thousands of linear counting queries over high-dimensional contingency tables, and has been widely used in the HDMM literature as a benchmark. To simulate realistic federated data, we partitioned the input randomly across clients, with each client holding a subset of records. For scalability experiments, we configured runs with 100, 1{,}000, and 3{,}000 clients. Each experiment was repeated five times, and we report the mean and standard deviation of runtime. 

All experiments were conducted on a Linux server with 20 CPU cores and 128 GB of memory. We recorded (i) average client computation time, (ii) server computation time, and (iii) total end-to-end runtime of the protocol. 




\begin{table*}
\centering
\renewcommand{\arraystretch}{1.2}
\begin{tabular}{r c p{2.5cm} p{2.0cm} p{2.5cm} p{2.5cm} p{2.5cm}}
\hline
\textbf{Clients} & 
\textbf{Threat Model} & 
\textbf{Avg.\ Client Comp. Time (ms)} & 
\textbf{Server Comp. Time (s)} & 
\textbf{Total Runtime (s)} &
\textbf{Avg.\ Client Communication (KB)} &
\textbf{Server Communication (MB)} \\
\hline
100 & Semi-honest & 80ms  & 2.7s & 2.8s & 348 KB & 34 MB\\
1{,}000 & Semi-honest & 80ms & 4.1s  & 4.4s & 364 KB & 356 MB\\
3{,}000 & Semi-honest & 90ms  & 8.2s  & 8.3s & 395 KB & 1156 MB\\
\hline
100 & Malicious & 140ms  & 2.7s & 2.9s & 392 KB & 38.3 MB\\
1{,}000 & Malicious & 660ms  & 7.4s  & 8.2s & 418 KB & 408 MB\\
3{,}000 & Malicious & 2400ms  & 35.3s  & 41.8s & 449 KB& 1315 MB\\
\hline
\end{tabular}
\caption{Runtime results for Distributed HDMM under the Census SF1 workload. 
Values are averaged over five runs.}
\label{tbl:experiment_table}
\end{table*}




\subsection{RQ1: Scalability}

\paragraph{Experiment setup.}  
To evaluate scalability, we ran Distributed HDMM with 100, 1{,}000, and 3{,}000 simulated clients on the Census SF1 workload. Each client held a disjoint subset of randomly generated records, and we measured total runtime, client computation time, and server computation time under unlimited bandwidth and latency. Each experiment was repeated five times, and we report averages.

\paragraph{Distributed HDMM scales to thousands of clients.}  
Figures~\ref{fig:total-time-semihonest} and~\ref{fig:total-time-malicious} and Table~\ref{tbl:experiment_table} show that total runtime remains low even as the number of clients increases. Moving from 1{,}000 to 3{,}000 clients increases runtime by less than five seconds, confirming that the protocol scales logarithmically with the number of participants. This is expected: HDMM’s cost is dominated by workload optimization, which is independent of dataset size, while the Bell aggregation protocol is communication-efficient and scales well in the number of clients. Importantly, both the semi-honest and malicious settings exhibit scalability: in the semi-honest case, runtime grows slowly with the number of clients, remaining practical even for several thousand participants; in the malicious case, the additional verification and cryptographic checks introduce higher overhead, so scaling is somewhat less efficient, but still well within feasible limits for large deployments. These results indicate that Distributed HDMM is suitable for large federated deployments involving tens of thousands of clients under both adversarial models.

\paragraph{Distributed HDMM does not impose large communication costs.}  
Although not shown in a figure here, our measurements of bytes sent and received confirm that communication costs remain modest, as shown in \ref{fig:client-bytes} and ~\ref{fig:server-bytes}. Each client transmits only its masked measurement vector, whose size is proportional to the query workload but independent of the number of clients. As a result, communication grows linearly with the dimensionality of the workload rather than the number of participants. In the SF1 workload, the per-client communication was only a few megabytes, even with 3{,}000 clients. This demonstrates that communication overhead does not pose a scalability bottleneck.

\paragraph{Distributed HDMM requires modest client computation and reasonable server computation.}  
Figure~\ref{fig:client-comp-time-sbw-unlimited} shows that average client computation time remains low across all settings; it remains below 1 second in the semi-honest setting, and below 5 seconds in the malicious setting. This cost reflects a single matrix–vector multiplication, noise addition, and encoding step per client. In the malicious setting, the Bell protocol also requires additional verification of signed values, increasing computation cost. Because these operations are lightweight, Distributed HDMM places minimal computational burden on clients, making it practical even for resource-constrained devices in federated environments. On the server side, computation is somewhat higher in both semi-honest and malicious cases, driven primarily by HDMM’s workload optimization step and the cryptographic operations required to process aggregated inputs. As shown in Figure~\ref{fig:client-comp-time-sbw-unlimited}, this overhead remains modest, with total server time well under a few seconds even at 3,000 clients, though malicious-secure execution incurs additional cryptographic cost relative to the semi-honest setting.

\subsection{RQ2: Impact of Network Restrictions}

\paragraph{Experiment setup.}  
To understand the effect of network constraints, we introduced artificial bandwidth and latency limits into our simulation. We considered two representative scenarios: (1) limiting clients to 1 Mbps upload bandwidth with 100 ms latency, and (2) limiting the server to 1000 Mbps bandwidth with 100 ms latency. These scenarios approximate realistic heterogeneous network conditions in federated deployments, such as mobile devices with limited uplink capacity and servers with constrained aggregation bandwidth. We measured total runtime under both configurations and compared the results to the unlimited bandwidth baseline.

\paragraph{Client bandwidth limitations do not have significant impact on scalability.}  
Figure~\ref{fig:client-comp-time-sbw-unlimited} shows that constraining clients to 1 Mbps has negligible effect on runtime. Even with 10{,}000 clients, total runtime remains nearly identical to the unlimited-bandwidth case. This is because each client transmits only a single masked measurement vector of modest size, and the Bell aggregation protocol does not require extensive interactive communication. These results suggest that Distributed HDMM is well-suited to federated environments with heterogeneous or bandwidth-limited clients, since communication cost per client is small and does not grow with the number of participants.

\paragraph{Server bandwidth limitations have significant impacts on scalability.}  
In contrast, Figure~\ref{fig:client-comp-time-sbw-unlimited} shows that restricting the server to 100 Mbps bandwidth increases runtime noticeably, particularly as the number of clients grows. Unlike clients, the server must receive and process vectors from all participants, so its communication load scales linearly in the number of clients. This makes server uplink capacity the primary bottleneck in very large deployments. Nevertheless, even under this restriction, the protocol completed within tens of seconds for up to 3{,}000 clients. These findings suggest that server provisioning—rather than client communication—will determine scalability in practice, and highlight the importance of high-capacity aggregation servers for production deployments.

\begin{figure*}
\centering
\hspace*{-10pt}\begin{tabular}{c c}
\includegraphics[width=.5\textwidth]{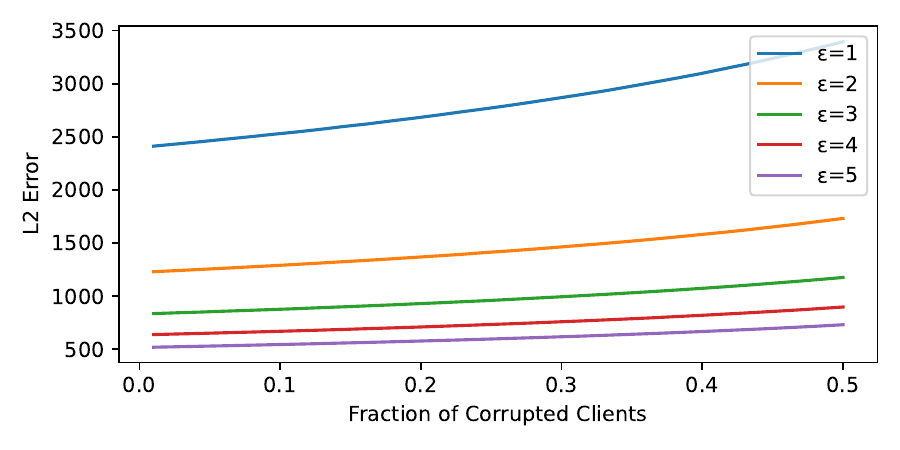}
&
\includegraphics[width=.5\textwidth]{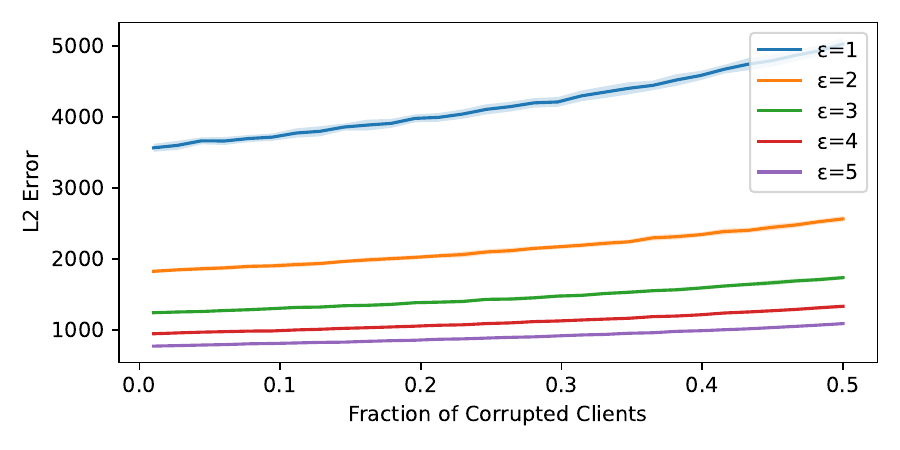}\\
\textbf{Adult} & \textbf{Census SF1}\\
\end{tabular}
\caption{Impact of Distributed HDMM on utility, for 2-way marginals on the Adult dataset and for the Census SF1 workload and 1000 clients. Distributed HDMM produces the same utility as central-model HDMM when $\theta=0$ (left edge of the graph). Error increases slowly with $\theta$.}
\end{figure*}

\begin{figure*}
\centering
\hspace*{-10pt}\begin{tabular}{c c}
\includegraphics[width=.5\textwidth]{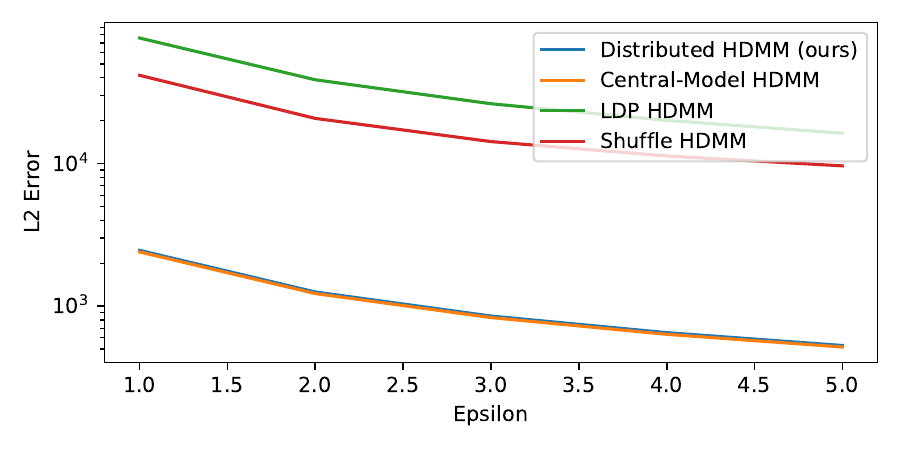}
&
\includegraphics[width=.5\textwidth]{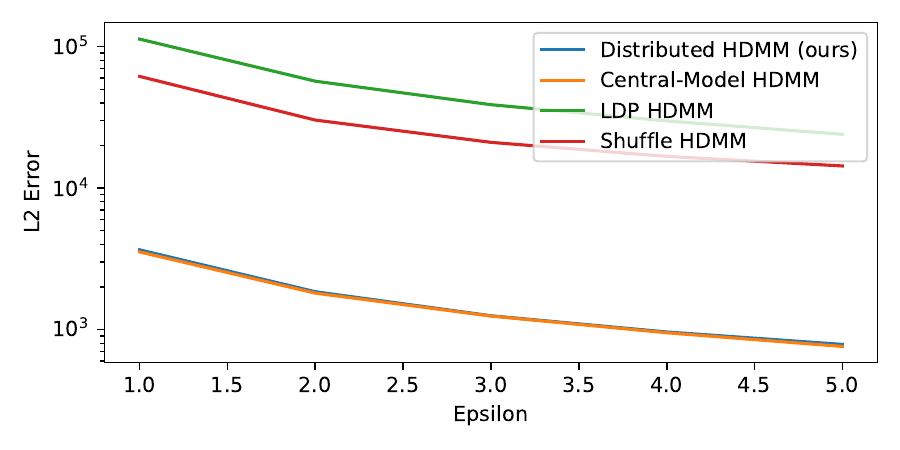}\\
\textbf{Adult} & \textbf{Census SF1}\\
\end{tabular}
\caption{Utility comparison of Distributed HDMM to local-model, central-model, and shuffle-model baselines, for $\theta=0.05$ and 1000. Note logarithmic vertical axis. Distributed HDMM nearly matches the utility of central-model HDMM, while both the local model and shuffle model increase error by an order of magnitude.}
\end{figure*}

\subsection{RQ3: Impact on Utility}

\paragraph{Experiment setup.}  
To evaluate accuracy, we compared Distributed HDMM to three baselines: the central-model HDMM, the local model (LDP), and the shuffle model. Following McKenna et al.~\cite{mckenna2018optimizing}, we measured error on two standard benchmarks: the Adult dataset (two-way marginals) and the Census SF1 workload. We varied the corruption fraction $\theta$ from 0 to 0.3 to examine robustness, and ran experiments at multiple privacy budgets ($\epsilon \in \{1, 2, 3, 4, 5\}$). Error was reported as $L_2$ error (i.e. root mean squared error (RMSE)) between the noisy and true query answers. We set the number of clients to 1000.

\paragraph{Distributed HDMM nearly matches central-model accuracy.}  
Figure~7 shows that when $\theta=0$, Distributed HDMM achieves the same accuracy as the central model, confirming that secure aggregation does not introduce additional error beyond that of HDMM itself. As $\theta$ increases, error grows slowly because honest clients must add slightly more noise to compensate for potentially corrupted participants. Even with $\theta=0.3$, the error remains within a small constant factor of the central-model baseline.

\paragraph{Comparison to local and shuffle models.}  
Figure~8 highlights the difference between Distributed HDMM and baselines representing alternative approaches. At $\theta=0.05$, Distributed HDMM nearly matches the central model, while both local and shuffle models increase error by roughly an order of magnitude. The shuffle model provides better accuracy than local DP, but still lags far behind Distributed HDMM. This confirms that Distributed HDMM closes the accuracy gap while retaining the trust assumptions of distributed protocols.
The cryptographic constructions used in Distributed HDMM do introduce additional computational and communication overhead compared to local model or shuffle model approaches, and this overhead grows (slowly) with the number of clients, a shown in Figures~\ref{fig:total-time-semihonest} and \ref{fig:total-time-malicious}. For most deployments, the improvement in utility from Distributed HDMM is likely to be worth the small increase in running time for the protocol.


\section{Related Work}

\paragraph{Distributed differentially private query answering via local differential privacy}  
Local differential privacy (LDP)~\cite{erlingsson2014rappor, ding2017collecting, wang2017locally} eliminates the need for a trusted curator by having each client perturb its data locally before sending it to the server. While attractive for its minimal trust assumptions, LDP typically suffers from high variance and poor accuracy, especially on high-dimensional workloads. For example, frequency estimation and marginal queries under LDP require far more noise than central-model approaches, resulting in error an order of magnitude larger than central DP in many practical settings. Our work avoids this limitation by combining secure aggregation with HDMM, achieving utility close to the central model while maintaining distributed trust.

\paragraph{Distributed differentially private query answering via shuffle differential privacy}  
The shuffle model~\cite{cheu2019distributed, balle2019privacy, erlingsson2019amplification} strengthens LDP by introducing an additional non-colluding shuffler that randomly permutes clients’ messages before forwarding them to the server. This provides improved privacy amplification, often narrowing the accuracy gap with the central model. However, the shuffle model requires trust in a non-colluding shuffler, and confidentiality collapses if the server and shuffler collude. In addition, the amount of noise required in shuffle protocols generally scales with the number of honest clients, meaning utility can degrade significantly under adversarial participation (see Table 3 in~\cite{cheu2021differential}). By contrast, Distributed HDMM tolerates a malicious server, requires no shuffler, and retains near-central accuracy even when up to half of the clients are corrupted.

\paragraph{Distributed differentially private query answering via MPC}  
Another line of work uses secure multiparty computation (MPC) to realize central-model accuracy without a trusted curator. Shrinkwrap~\cite{bater2018shrinkwrap}, Crypt$\epsilon$~\cite{roy2020crypt}, and IncShrink~\cite{wang2022incshrink} achieve central-model DP through MPC protocols, but rely on two non-colluding semi-honest servers, with clients not participating directly. Honeycrisp~\cite{roth2019honeycrisp} and Orchard~\cite{roth2020orchard} introduce specialized secure aggregation protocols that scale to millions of participants, but depend on small committees and assume semi-honest aggregation servers.  

Each of these approaches differs fundamentally from ours. Shrinkwrap is designed for private data federations and focuses on minimizing padding in oblivious query processing, not on federated-scale client participation. Crypt$\epsilon$ bridges local and central models but requires two non-colluding servers, which our protocol does not. IncShrink incrementally constructs differentially private synopses, but permits untrusted servers to view DP summaries directly. Honeycrisp and Orchard rely on committee selection and tolerate only a small fraction of corrupted clients, whereas our protocol tolerates up to half of clients being adversarial and does not require committees or multiple servers.  

LDP and shuffle approaches sacrifice accuracy for minimal trust, while MPC-based approaches often require additional trust assumptions such as non-colluding servers or committees. Distributed HDMM is the first to combine the accuracy of the central-model HDMM with the scalability and robustness of single-server secure aggregation, achieving utility within a small constant factor of central DP while requiring only an honest majority of clients.


\section{Conclusion}    
We introduced \emph{Distributed HDMM}, a protocol that brings the accuracy benefits of the central-model High-Dimensional Matrix Mechanism into a distributed setting without relying on a trusted curator. By combining optimized strategy matrices with secure aggregation, our approach achieves accuracy within a small constant factor of the central model while maintaining strong privacy and robustness guarantees against both semi-honest and malicious adversaries.  

Our evaluation demonstrates that Distributed HDMM scales to thousands of clients, incurs modest computational overhead, and consistently delivers accuracy that far surpasses local and shuffle-model approaches. These results highlight that high-utility differentially private data analysis is feasible even in federated or multi-institutional environments where no fully trusted curator exists.  


\bibliographystyle{plain}
\bibliography{refs,sec}
\end{document}